\definecolor{aqua}{rgb}{0, 1.0, 1.0}
\definecolor{fuschia}{rgb}{1.0, 0, 1.0}
\definecolor{gray}{rgb}{0.502, 0.502, 0.502}
\definecolor{lime}{rgb}{0, 1.0, 0}
\definecolor{maroon}{rgb}{0.502, 0, 0}
\definecolor{navy}{rgb}{0, 0, 0.502}
\definecolor{olive}{rgb}{0.502, 0.502, 0}
\definecolor{purple}{rgb}{0.502, 0, 0.502}
\definecolor{silver}{rgb}{0.753, 0.753, 0.753}
\definecolor{teal}{rgb}{0, 0.502, 0.502}
\newdimen\itex@wd%
\newdimen\itex@dp%
\newdimen\itex@thd%
\def\itexspace#1#2#3{\itex@wd=#3em%
\itex@wd=0.1\itex@wd%
\itex@dp=#2ex%
\itex@dp=0.1\itex@dp%
\itex@thd=#1ex%
\itex@thd=0.1\itex@thd%
\advance\itex@thd\the\itex@dp%
\makebox[\the\itex@wd]{\rule[-\the\itex@dp]{0cm}{\the\itex@thd}}}
\newif\if@sup
\newtoks\@sups
\def\append@sup#1{\edef\act{\noexpand\@sups={\the\@sups #1}}\act}%
\def\reset@sup{\@supfalse\@sups={}}%
\def\mk@scripts#1#2{\if #2/ \if@sup ^{\the\@sups}\fi \else%
  \ifx #1_ \if@sup ^{\the\@sups}\reset@sup \fi {}_{#2}%
  \else \append@sup#2 \@suptrue \fi%
  \expandafter\mk@scripts\fi}
\def\tensor#1#2{\reset@sup#1\mk@scripts#2_/}
\def\multiscripts#1#2#3{\reset@sup{}\mk@scripts#1_/#2%
  \reset@sup\mk@scripts#3_/}
\newbox\slashbox \setbox\slashbox=\hbox{$/$}
\def\itex@pslash#1{\setbox\@tempboxa=\hbox{$#1$}
  \@tempdima=0.5\wd\slashbox \advance\@tempdima 0.5\wd\@tempboxa
  \copy\slashbox \kern-\@tempdima \box\@tempboxa}
\def\slash{\protect\itex@pslash}
\def\clap#1{\hbox to 0pt{\hss#1\hss}}
\let\oldroot\root
\def\root#1#2{\oldroot #1 \of{#2}}
\renewcommand{\sqrt}[2][]{\oldroot #1 \of{#2}}
\DeclareSymbolFont{symbolsC}{U}{txsyc}{m}{n}
\DeclareSymbolFont{stmry}{U}{stmry}{m}{n}
\DeclareFontFamily{OMX}{MnSymbolE}{}
\DeclareSymbolFont{mnomx}{OMX}{MnSymbolE}{m}{n}
\DeclareFontShape{OMX}{MnSymbolE}{m}{n}{
    <-6>  MnSymbolE5
   <6-7>  MnSymbolE6
   <7-8>  MnSymbolE7
   <8-9>  MnSymbolE8
   <9-10> MnSymbolE9
  <10-12> MnSymbolE10
  <12->   MnSymbolE12}{}
\def\re@DeclareMathSymbol#1#2#3#4{%
    \let#1=\undefined
    \DeclareMathSymbol{#1}{#2}{#3}{#4}}
\re@DeclareMathSymbol{\neArrow}{\mathrel}{symbolsC}{116}
\re@DeclareMathSymbol{\neArr}{\mathrel}{symbolsC}{116}
\re@DeclareMathSymbol{\seArrow}{\mathrel}{symbolsC}{117}
\re@DeclareMathSymbol{\seArr}{\mathrel}{symbolsC}{117}
\re@DeclareMathSymbol{\nwArrow}{\mathrel}{symbolsC}{118}
\re@DeclareMathSymbol{\nwArr}{\mathrel}{symbolsC}{118}
\re@DeclareMathSymbol{\swArrow}{\mathrel}{symbolsC}{119}
\re@DeclareMathSymbol{\swArr}{\mathrel}{symbolsC}{119}
\re@DeclareMathSymbol{\nequiv}{\mathrel}{symbolsC}{46}
\re@DeclareMathSymbol{\Perp}{\mathrel}{symbolsC}{121}
\re@DeclareMathSymbol{\Vbar}{\mathrel}{symbolsC}{121}
\re@DeclareMathSymbol{\sslash}{\mathrel}{stmry}{12}
\re@DeclareMathSymbol{\boxslash}{\mathrel}{stmry}{27}
\re@DeclareMathSymbol{\boxbslash}{\mathrel}{stmry}{28}
\re@DeclareMathSymbol{\boxast}{\mathrel}{stmry}{24}
\re@DeclareMathSymbol{\boxcircle}{\mathrel}{stmry}{29}
\re@DeclareMathSymbol{\boxbox}{\mathrel}{stmry}{30}
\re@DeclareMathSymbol{\obslash}{\mathrel}{stmry}{20}
\re@DeclareMathSymbol{\obar}{\mathrel}{stmry}{58}
\re@DeclareMathSymbol{\olessthan}{\mathrel}{stmry}{60}
\re@DeclareMathSymbol{\ogreaterthan}{\mathrel}{stmry}{61}
\re@DeclareMathSymbol{\bigsqcap}{\mathop}{stmry}{"64}
\re@DeclareMathSymbol{\biginterleave}{\mathop}{stmry}{"6}
\re@DeclareMathSymbol{\invamp}{\mathrel}{symbolsC}{77}
\re@DeclareMathSymbol{\parr}{\mathrel}{symbolsC}{77}
\def\Decl@Mn@Delim#1#2#3#4{%
  \if\relax\noexpand#1%
    \let#1\undefined
  \fi
  \DeclareMathDelimiter{#1}{#2}{#3}{#4}{#3}{#4}}
\def\Decl@Mn@Open#1#2#3{\Decl@Mn@Delim{#1}{\mathopen}{#2}{#3}}
\def\Decl@Mn@Close#1#2#3{\Decl@Mn@Delim{#1}{\mathclose}{#2}{#3}}
\Decl@Mn@Open{\llangle}{mnomx}{'164}
\Decl@Mn@Close{\rrangle}{mnomx}{'171}
\Decl@Mn@Open{\lmoustache}{mnomx}{'245}
\Decl@Mn@Close{\rmoustache}{mnomx}{'244}
\Decl@Mn@Open{\llbracket}{stmry}{'112}
\Decl@Mn@Close{\rrbracket}{stmry}{'113}
\DeclareRobustCommand\widecheck[1]{{\mathpalette\@widecheck{#1}}}
\def\@widecheck#1#2{%
    \setbox\z@\hbox{\m@th$#1#2$}%
    \setbox\tw@\hbox{\m@th$#1%
       \widehat{%
          \vrule\@width\z@\@height\ht\z@
          \vrule\@height\z@\@width\wd\z@}$}%
    \dp\tw@-\ht\z@
    \@tempdima\ht\z@ \advance\@tempdima2\ht\tw@ \divide\@tempdima\thr@@
    \setbox\tw@\hbox{%
       \raise\@tempdima\hbox{\scalebox{1}[-1]{\lower\@tempdima\box
\tw@}}}%
    {\ooalign{\box\tw@ \cr \box\z@}}}
\NewDocumentCommand\mathraisebox{moom}{%
\IfNoValueTF{#2}{\def\@temp##1##2{\raisebox{#1}{$\m@th##1##2$}}}{%
\IfNoValueTF{#3}{\def\@temp##1##2{\raisebox{#1}[#2]{$\m@th##1##2$}}%
}{\def\@temp##1##2{\raisebox{#1}[#2][#3]{$\m@th##1##2$}}}}%
\mathpalette\@temp{#4}}
\def\udots{\mathinner{\mkern2mu\raise\p@\hbox{.}
\mkern2mu\raise4\p@\hbox{.}\mkern1mu
\raise7\p@\vbox{\kern7\p@\hbox{.}}\mkern1mu}}
\newcommand{\gt}{>}
\newcommand{\lt}{<}
\renewcommand{\scriptsize}{\scriptstyle}
\theoremstyle{plain}
\newtheorem{theorem}{Theorem}
\newtheorem{lemma}{Lemma}
\newtheorem{prop}{Proposition}
\theoremstyle{definition}
\newtheorem{defn}{Definition}
\newtheorem{conjecture}{Conjecture}
\theoremstyle{remark}
\newtheorem{remark}{Remark}
\newcommand{\charpol}{\operatorname{charpol}}
\begin{document}

\preprint{UTWI-37-2023\\}

\title{The Hitchin Image in Type-D}

\author{
 Aswin Balasubramanian%
 \address{
     International Centre for Theoretical Sciences\\
     Survey No. 151, Shivakote\\
     Hesaraghatta Hobli\\
     Bengaluru, 560 089 India\\
     \email{aswinb.phys@gmail.com}
},
 Jacques Distler%
 \address{
     Theory Group\\
     Department of Physics,\\
     University of Texas at Austin,\\
     Austin, TX 78712, USA \\
      \email{distler@golem.ph.utexas.edu}\\
      \email{cjp3247@utexas.edu}
 },
Ron Donagi%
\address{
    Department of Mathematics\\
    University of Pennsylvania\\
    Philadelphia, PA 19104-6395, USA\\
     \email{donagi@math.upenn.edu}\\
},
Carlos Perez-Pardavila${}^\mathrm{b}$ 
}

\Abstract{Motivated by their appearance as Coulomb branch geometries of Class S theories, we study the image of the local Hitchin map in tame Hitchin systems of type-D with residue in a special nilpotent orbit $\mathcal{O}_H$. We describe two important features which distinguish it from the type A case studied in \cite{Balasubramanian:2020fwc}. The first feature, which we term \textit{even type constraints}, arises iff the partition label $[\mathcal{O}_H]$ has even parts. Our Hitchin image is non-singular and thus different from the one studied by Baraglia and Kamgarpour.  We argue that our Hitchin image always globalizes to being the Hitchin base of an integrable system. The second feature, which we term  \textit{odd type constraints}, is related to a particular finite group $\overline{A}_b(\mathcal{O}_H)$ being non-trivial. This finite group parametrizes the choices for the local Hitchin base. Additionally, we also show that the finite group $\overline{A}_b(\mathcal{O}_H)$ encodes the size of the dual special piece. } 

 \date{October 9, 2023.}
\maketitle 
\tableofcontents
 

\section{Introduction}

One place where complex integrable systems enter in Physics is in describing the geometry of the Coulomb branch of a 4D $\mathcal{N}=2$ supersymmetric quantum field theory. The Coulomb branch is a special-K\"ahler manifold, whose geometry is governed by a complex integrable system \cite{Freed:1997dp,Donagi:1995cf}. In the case of class-S theories, that complex integrable system is a Hitchin system of type ADE \cite{Gaiotto:2009we,Gaiotto:2009hg}. The base curve, $C$, is the (punctured) Riemann surface on which the 6D (2,0) theory of type ADE is compactified.

The \emph{superconformal} theories of class-S (where all masses are tuned to zero) are governed by particular symplectic leaves of the Hitchin integrable system. In turn, if $C$ has genus-$g$ and $n$ punctures, the corresponding superconformal theories come in a \emph{family} of complex dimension $3g-3+n$, parameterized by the complex structure of the punctured curve $C$. We are thus naturally led to consider \emph{families} of Hitchin systems, fibered over the moduli space of complex structures, $\mathcal{M}_{g,n}$. From the Physics, it is natural to extend this to the boundary, where $C$ degenerates.  In \cite{Balasubramanian:2020fwc}, we studied this for type-A, and showed how the Hitchin bases fit together to form a holomorphic vector bundle over $\overline{\mathcal{M}}_{g,n}$. For $\overline{\mathcal{M}}_{0,4}\simeq \mathbb{CP}^1$, this vector bundle can be calculated very explicitly, and its splitting type\footnote{Every holomorphic vector bundle, $E\to\mathbb{CP}^1$ splits as a direct sum of line bundles, and the degrees of those line bundles (the ``splitting type'') completely characterize $E$.} constitutes a new global property of the family of SCFTs.

Extending that construction to types $D$ and $E$ is the subject of the present series of papers. In this first installment, we will study the \emph{local} behaviour of the Higgs field near one of the punctures, where it is assumed to have a simple pole with nilpotent residue. \textit{Therefore, and unless otherwise instructed, we'll always refer to the local Hitchin base.}

In type-A, the \emph{local Hitchin base} is parameterized by the Laurent tails of the coefficients of powers of $\lambda$ in $\charpol(\phi)=\det(\phi(t)-\lambda I)$. The new ingredient in type-D is that the \emph{leading} Laurent coefficients are not all independent; rather, they obey certain polynomial \emph{constraints}. Those constraints were first computed in \cite{Chacaltana:2011ze}. Later, Baraglia and Kamgarpour \cite{MR3815160} found a general expression for the constraints when the residue of $\phi$ lies in a \emph{Richardson} nilpotent orbit\footnote{A Richardson nilpotent orbit is one that is induced from the zero orbit inside some Levi subalgebra \cite{CollingwoodMcGovern}.}. Part of the purpose of the present paper is to extend that to the (physically-relevant) case of an arbitrary \emph{special} nilpotent orbit (see Def \ref{def_special_orbit}). Outside of Cartan type A, Richardson nilpotent orbits form a proper subset of the set of special nilpotent orbits. 

Moreover, there are a few crucial differences between the constructions of Chacaltana-Distler (CD) and Baraglia-Kamgarpour (BK). They construct different Hitchin bases. CD's base is birational to (a finite cover of) BK's base. This will have a dramatic effect when we go to construct global examples, as we briefly do in Section \ref{globalconsiderations}. 

In Section \ref{sec:local_hitchin_base}, we outline the local Hitchin setup. We review Spaltenstein's results on the factorization of $\charpol(\phi)$ in Section \ref{sec:spaltenstein}. The factorization that Spaltenstein found is at the root of what we call the ``even type" constraints of Baraglia-Kamgarpour \cite{MR3815160}  and Chacaltana-Distler \cite{Chacaltana:2011ze} (who called them ``$c$-type constraints''). Those two sets of authors take different approaches to implementing these constraints and the difference is summarized in Theorem \ref{th:BKCD} of Section \ref{sec:relBKCD}.

In addition to the even type constraints, Spaltenstein's factorization results also lead to a set of what we call ``odd type'' constraints (Chacaltana-Distler called these ``$a$-type constraints''). We explain this in Section \ref{sec:oddtype}, where we relate the existence of odd type constraints to the presence of a nontrivial dual special piece (see Def \ref{specialpiece-defn} ), dual to the Hitchin nilpotent orbit. Each odd type constraint is associated with passing to a double-cover (setting $c_{2k}=a_k^2$ in the notation below) of the naive Hitchin base. More precisely, we define a subgroup of Lusztig's canonical quotient, $\overline{A}_b\subset \overline{A}(\mathcal{O}_H)$ such that the nilpotent orbits in the dual special piece are in 1-1 correspondence with subgroups $C\subset \overline{A}_b$. If (as is the case in type-D) $\overline{A}_b =(\mathbb{Z}_2)^s$, then there are $2^s$ \emph{choices} for the Hitchin base, which has a factor $\mathbb{C}^s/C$. These choices are labeled by the choice of a subgroup $C\subset \overline{A}_b$ or equivalently by a choice of nilpotent orbit in the dual special piece. 

Both part (iii) of Theorem \ref{th:BKCD} and the existence of the odd-type constraints discussed in Section \ref{sec:oddtype} rely on a non-obvious feature of the coefficients that appear in Spaltenstein's factorization of $\charpol(\phi)$. A-priori, the coefficients are algebraic (in particular, not necessarily single-valued)  functions on the Lie algebra. Certain of those coefficients, however,  turn out to be polynomial functions on the Lie algebra. The result is actually independent of the details of Spaltenstein's factorization and can be summarized in Conjecture \ref{Simpleconjecture}. As discussed in Section \ref{sec:main_conjecture}, we have a lot of evidence for, but no proof of, the conjecture.

In Section \ref{sec:poisson}, we explain how the present work is compatible with the existence of mass deformation (or Poisson deformations) of the integrable system associated to the superconformal field theory. As explained in \cite{Balasubramanian:2018pbp}, these deformations, in their local incarnation, correspond to \textit{sheets} in a Lie algebra. In Section \ref{sec:poisson}, we outline a conjecture that relates the local Hitchin image and Losev's theory of birational sheets \cite{MR4359565}, which is refinement of the usual theory of sheets. 

Finally, in Section \ref{globalconsiderations}, we give some global examples which illustrate the difference between the Hitchin bases constructed \`a la Chacaltana-Distler versus Baraglia-Kamgarpour and the different Hitchin bases that result from different choices of odd type constraints.

\section{The Local Hitchin Base in Type-D}
\label{sec:local_hitchin_base}

We study the local Hitchin base for tame Higgs bundles of type $D_n$ with special nilpotent residues for the Higgs field. In this paper, we work in the framework of meromorphic Higgs bundles \cite{markman1994spectral,bottacin1995symplectic}. 

Our primary motivation is to extend the work in \cite{Balasubramanian:2020fwc} where we studied Hitchin base(s) of Cartan type A as a \emph{family} over the moduli space of curves. In order to pursue a similar study in type-D, it turns out we need to understand some new phenomena that occur already in the study of the local Hitchin base. One can, in fact, study the Hitchin base at the following three levels

\begin{enumerate}%
\item The purely local Hitchin base/Hitchin image (for a punctured disc).
\item The global Hitchin base (for a global curve $C_{g,n}$ ).
\item The family of Hitchin bases over the moduli of curves $\overline{\mathcal{M}}_{g,n}$ (``very global'').

\end{enumerate}

In what follows, we confine ourselves to questions that can be addressed at the local level. The stories are, however, tightly linked and at a certain point, we will actually have to address the question of which features in the local story globalize. We will separately consider the new questions arising at the global and very global levels.

At the purely local level, when compared to type A, we find that describing the image of the Hitchin map in type-D is more subtle. Let $\mathcal{B}^{\text{naive}}$ be the naive Hitchin base computed as though we were in type A (by this, we mean that we follow the Young tableaux based algorithm in \cite{Balasubramanian:2020fwc} and just ignore the base contributions in odd degrees except for the Pfaffian in the case of $D_N$, $N$ odd). We say the Hitchin base \emph{has constraints} iff the actual local Hitchin base $\mathcal{B}$ is not the same as $\mathcal{B}^{\text{naive}}$.

Recall that nilpotent orbits in $D_n$ are classified by D-type partitions of $2n$\cite{CollingwoodMcGovern}. D-type partitions are those partitions in which even parts occur with even multiplicity. We also follow the usual convention that the parts $p_j$ of a partition $[p_j]$ are always written in non-increasing order. 

The constraints we encounter come in two types
\begin{enumerate}
\item \textbf{Even-type constraints} : These occur \emph{if and only if} the Hitchin partition has even parts.
\item \textbf{Odd-type constraints} : These occur \emph{if and only if} the $2j^{\text{th}}$ and $(2j+1)^{\text{st}}$ parts of the Hitchin partition are both odd ($p_{2j}=2r+1$, $p_{2j+1}=2s+1$) with $p_{2j}> p_{2j+1}$.
\end{enumerate}


When any of these constraints occur, the description of the Hitchin base requires more work. However, in \emph{all} these cases, we find that the actual Hitchin base is affine and smooth. This is in contrast to the results of \cite{MR3815160} where they find instances where the local Hitchin base in singular. Locally, the difference arises from how one chooses the generators of the (local) ring of invariant polynomials. Our choice is natural from a physical perspective and was the one that was made in the examples studied in \cite{Chacaltana:2011ze}. 

We restrict to special nilpotent residues for the Higgs field for a few reasons. First, these are the orbits that have appeared so far in the description of of the Coulomb branches of superconformal field theories of class-S. Second, we will find that aspects of the story about even type constraints are easier to understand for special nilpotent residues. Even type constraints can be thought of as an extension of what was called \emph{fingerprints of surface operators} in \cite{Gukov:2006jk}. 


\begin{remark}
    In the GIT framework for constructing the moduli of stable Higgs bundles, an important notion is that of \textit{strongly parabolic Higgs bundles}. They correspond to the cases where the residue of the Higgs field is nilpotent and compatible with the parabolic structure at the puncture. From our point of view, the formalism of strongly parabolic Higgs bundles is somewhat limiting as it leads only to residues that live in Richardson nilpotent orbits (as in \cite{MR3815160}). For our purposes, we needed to allow all special nilpotent orbits as possible residues. For any simple Lie algebra not of type-$A$, the set of Richardson nilpotent orbits form a proper subset of the set of special nilpotent orbits. It would be interesting to know what replacement for the notion of strongly parabolic Higgs bundles would lead us exactly to the set of all special nilpotent orbits. 
\end{remark}

\begin{remark}
We note that some of the underlying mathematics (having to do with Spaltenstein's work on the Kazhdan-Lusztig map \cite{spaltenstein1988polynomials,spaltenstein1990kazhdan} ) appearing in our study of the structure of the local Hitchin image had also appeared in a recent study of an analog of the Hitchin fiber in the local setting \cite{MR4334167}. It is interesting to ask if a direct connection can be found between the features we find in the local Hitchin image and the geometry of the fibers in \cite{MR4334167}.
\end{remark}

\begin{remark}
    While this paper is confined to the case of type-D Hitchin systems, it is natural to wonder if similar features exist for other Cartan types. The connection to class-S theories extends most directly to all Hitchin systems corresponding to simply laced Lie algebras. So, the next natural set of examples to study would be Hitchin systems of type $E_6,E_7,E_8$. These have been studied and the explicit structure of constraints has been worked out for any special nilpotent orbit $\mathcal{O}$ in $E_6$ \cite{Chacaltana:2014jba} and $E_7$ \cite{Chacaltana:2017boe}. In the case of $E_8$, partial results on the constraints have been worked out in unpublished work by the authors of \cite{Chacaltana:2018vhp}. 
    
    As in type-D, the pattern of constraints is closely tied to the size of the dual special piece which, in some cases, is smaller than the size of Lusztig's quotient $\overline{A}(\mathcal{O})$. In the type-D case, we show in Theorem \ref{odd-type-theorem} that a smaller group $\overline{A}_b (\mathcal{O}) \subset \overline{A} (\mathcal{O}) $ captures the size of the dual special piece. Using the case-by-case computations in  \cite{Chacaltana:2014jba,Chacaltana:2017boe,Chacaltana:2018vhp,Chacaltana:2016shw}, we can define such a smaller group $\overline{A}_b (\mathcal{O})$ for each special nilpotent orbit in $E_6,E_7,E_8$.  We believe it should also be possible to give a case-free definition of the $\overline{A}_b$ group in a way that extends naturally to exceptional cases and reduces to the one given in Def \ref{abarb-definition} for the type-D cases. 

    For the non-simply laced cases, the connection to class-S theories is less direct. We do, however, note some possible directions. On the one hand, we have the local results which relate twisted defects to nilpotent orbits in non-simply laced Lie algebras \cite{Chacaltana:2012ch,Chacaltana:2013oka,Chacaltana:2015bna}. On the other hand, we have certain global results in the mathematical literature which relate Hitchin systems (without punctures) of type $B,C,F,G$ to Hitchin systems of type $A,D,E$ \cite{beck2022folding}.  Extending the global results to the punctured case using the local results on twisted defects might be a way to study corresponding questions for the non-simply laced cases.
\end{remark}

\subsection{The local setup}
\label{the_local_setup}

Let $\mathcal{O} = C[[t]]$ be the ring of formal power series and $\mathcal{F} = C((t))$ be the field of formal Laurent series. The Higgs field in the neighbourhood of a puncture has the following form

\begin{displaymath}
\phi' = \frac{\mathfrak{n}}{t} + \text{reg}
\end{displaymath}
where $\mathfrak{n} \in \mathfrak{g}$ is a nilpotent element living in the (co-adjoint) orbit $\mathcal{O}_H$ (ie. the Hitchin nilpotent orbit). Note that the above condition is equivalent to saying that $t \phi'$ is of the form $\mathfrak{n} + m \mathfrak{g}$ where $m$ is the unique maximal ideal in $\mathcal{O}$. This version will be helpful in relating to \cite{kazhdan1988fixed,spaltenstein1988polynomials,MR1079990}. 

Nilpotent orbits in the classical Lie algebras are labeled by partitions. For $\mathfrak{so}(2N)$, we have a D-partition $[p]$, a partition of $2N$ such that every even part occurs with even multiplicity\footnote{When all of the parts of $[p]$ are even, then there are actually \emph{two} nilpotent orbits corresponding to that partition. See \cite{CollingwoodMcGovern}.}. 
In what follows we will restrict to $\mathfrak{n}$ in a \textit{special} nilpotent orbit.

\begin{defn} (Special orbits in type-D)
    A special nilpotent orbit in type-D is a nilpotent orbit $\mathcal{O}_H$ whose partition label $[p]$  is a D-partition (all even parts occur with even multiplicity) such that there is an even number (possibly zero) of odd parts between any even parts and an even number (possibly zero) of odd parts at the beginning of the partition.  Equivalently \cite{CollingwoodMcGovern}, $[p]^T$ is a C-partition\footnote{A C-partition is a partition of $2N$ such that every odd part occurs with even multiplicity.}. \label{def_special_orbit}
\end{defn}

\begin{remark}
A Richardson orbit in type-D may be characterized as follows (\cite{MR0732807} 4.3):
\begin{itemize}
\item $[p]$ is special.
\item Let $r=\min_i\{i|p_i\; \text{even}\}-1$. Then, for all $j\geq \tfrac{1}{2}r+1$, if $p_{2j}=2l+1$ and  $p_{2j+1}=2m+1$, then $l>m$.
\end{itemize}
Evidently, all even orbits (orbits whose \emph{weighted Dynkin diagram} consists of 0's and 2's) are Richardson.
\end{remark}

Now, consider the map

\begin{displaymath}
\mu : \mathfrak{g}(\mathcal{F}) \rightarrow \mathbb{A}^N_{\mathcal{F}}
\end{displaymath}
that is defined as follows

\begin{displaymath}
\mu : x \rightarrow (Q_1(x),Q_2(x),\cdots,Q_N(x))
\end{displaymath}
where $Q_i$ are some chosen generators for the ring of invariant polynomials of $\mathfrak{g}$. We define  the \textit{local Hitchin image} associated to the nilpotent orbit $\mathcal{O}_H$ to be the image $\mu(\mathfrak{n} + m \mathfrak{g})$.

\paragraph*{{Zero Orders and Pole Orders}}\label{zero_orders_and_pole_orders}

We have a choice to make as to where the Higgs field lives. If we choose $\phi' \in H^0(ad(V) \otimes K)$, tame Higgs bundles on the punctured disc correspond to those where the Higgs field has a simple pole ($\phi' = \frac{a}{t} + \text{reg}$) at the puncture and the invariant polynomials built out of the Higgs field have poles of various orders. We denote these pole orders by $\pi_k$ where $k$ runs over the degrees of the invariant polynomials $Q_i$, $i\in \{1,2,\ldots , N \}$. 

On the other hand, if we choose $\phi \in H^0(ad(V) \otimes K(D))$, tame Higgs bundles correspond to those where $\phi \in \mathbb{C}[[t]]$ (with non-zero constant term) and the invariant polynomials built out of the Higgs field have zeros of order $\chi_k$ (in the notation in \cite{Balasubramanian:2020fwc}, where $k$ is the degree of the corresponding $Q_i$). The relationship between the two definitions for the Higgs fields is simply

\begin{displaymath}
\phi = t \phi'
\end{displaymath}
From this, it follows that the zero orders and pole orders are related by :

\begin{displaymath}
\chi_k = k - \pi_k
\end{displaymath}

\subsection{ Newton polygons}
\label{np-section}


It is sometimes convenient to draw Newton polygons to keep track of the non-zero coefficients in the Laurent tails of the coefficients of $\lambda$ in $\charpol(\phi - \lambda I )$.

Let $[p_j]$ be the Hitchin partition and let $[s_j]$ be the partial sums built out of the Hitchin partition where $s_j = \sum_{i=1}^{j}p_i$. Let $\sum_j {p_j}=n$. We follow the conventions of \cite{MR3815160} and define a Newton polygon $NP([p_j])$ associated to a partition $[p_j]$ to be the polygon bounded by line segments joining the points $(j,n-s_j)$. \footnote{We note here that Spaltenstein uses a different convention in \cite{spaltenstein1988polynomials} and uses a Newton polygon bounded by line segements joining the points $(n-s_j,j)$.  }  In Fig \ref{np-example}, we give an example of a Newton polygon for $n=12$ and Hitchin partition $[5,3,2^2]$.

\begin{figure}
\begin{center}
\begin{tikzpicture}[scale=0.5]

\draw[-,blue] (0,12)--(1,7)--(2,4)--(3,2)--(4,0) ;

\path[fill=blue!20] (0,12)--(1,7)--(2,4)--(3,2)--(4,0)--(13,0)--(13,12)--(0,12);

  \draw[->] (0,0) -- (14,0) ;
  \draw[->] (0,0) -- (0,14) ;

  \foreach \y in {0,...,11} {
    \foreach \x in {0} {
      \draw[fill,black] (\x, \y) circle (1.3pt) ;
    }
  }

  \foreach \y in {0,...,6} {
    \foreach \x in {1} {
      \draw[fill,black] (\x, \y) circle (1.3pt) ;
    }
  }

   \foreach \y in {0,...,3} {
    \foreach \x in {2} {
      \draw[fill,black] (\x, \y) circle (1.3pt) ;
    }
  }

   \foreach \y in {0,1} {
    \foreach \x in {3} {
      \draw[fill,black] (\x, \y) circle (1.3pt) ;
    }
  }

  \foreach \y in {12,13} {
    \foreach \x in {0} {
      \draw[fill,blue] (\x, \y) circle (1.3pt) ;
    }
  }

    \foreach \y in {7,...,13} {
    \foreach \x in {1} {
      \draw[fill,blue] (\x, \y) circle (1.3pt) ;
    }
  }

  \foreach \y in {4,...,13} {
    \foreach \x in {2} {
      \draw[fill,blue] (\x, \y) circle (1.3pt) ;
    }
  }

  \foreach \y in {2,...,13} {
    \foreach \x in {3} {
      \draw[fill,blue] (\x, \y) circle (1.3pt) ;
    }
  }

  \foreach \y in {0,...,13} {
    \foreach \x in {4} {
      \draw[fill,blue] (\x, \y) circle (1.3pt) ;
    }
  }

  \foreach \y in {0,...,13} {
    \foreach \x in {5,6,7,8,...,13} {
      \draw[fill,blue] (\x, \y) circle (1.3pt) ;
    }
  }
\end{tikzpicture}
\caption{A Newton polygon showing the non-zero coefficients in NP([$5,3,2^2$]). We have chosen the conventions from \cite{MR3815160} and taken the vertices to be at $(j,n-s_j)$ } 
\label{np-example}
\end{center}
\end{figure}
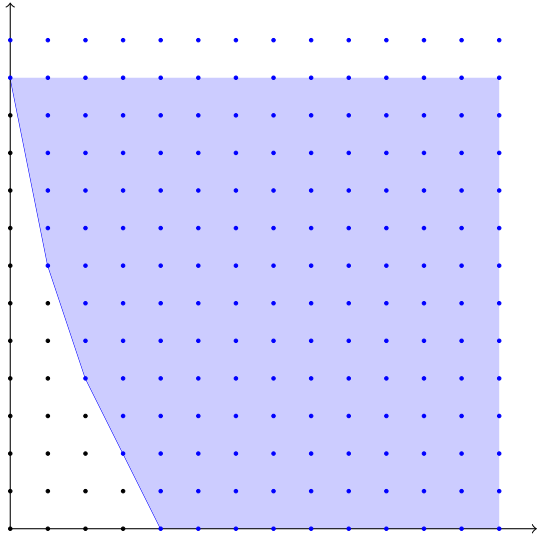

We denote a generic polynomial of degree $n$ with coefficients in $\mathcal{O}$ to be :

\begin{displaymath}
\mathcal{P}_n= \lambda^{n} + \sum _{\beta} \sum_{\alpha} \rho_{\alpha,\beta} t^{\alpha} \lambda^{\beta}
\end{displaymath}
We say that a polynomial $\mathcal{P}_n$ corresponds to a Newton polygon $NP$ if the coefficients $\rho_{\alpha,\beta}$ are non-zero only if $(\alpha,\beta)$ lie in the region defined by the polygon (see shaded region in example above). Note that, by definition, we always have $\rho_{(0,n)}=1$.

We denote by $NP([\mu_i])$ a Newton polygon determined by a partition $[\mu_i]$. Now, let $\charpol(\phi)$ in type $D_N$ be :

\begin{equation}\label{charpolD}
\charpol(\phi) = \lambda^{2N} + u_2\lambda^{2N-2} + \cdots + u_{2N-2}\lambda^{2} + u_{N}^2
\end{equation}
with coefficients $u_{2j},u_N \in \mathcal{O}[[t]]$, the ring of formal power series. Let $j \in 1,2,\ldots,N-1$.  We will denote the leading terms, as $t\to 0$, of $u_{2j}[[t]],u_N[[t]]$  as $c_{2j}t^{\chi_{2j}},\tilde{c}t^{\chi_N}$ respectively. The leading terms have zero orders $\chi_{2j},\chi_N$ and $c_{2j},\tilde{c}$ are the coefficients.  A basic result that determines the zero orders is the following :

\begin{lemma}
\label{lemma1}
   The Newton polygon corresponding to $\charpol(\phi)$ is $ NP([p_j])$ where $[p_j]$ is the Hitchin partition.
\end{lemma}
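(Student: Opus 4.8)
The plan is to read off the Newton polygon from the $t$-adic valuations of the eigenvalues of $\phi=t\phi'\in\mathfrak n+m\mathfrak g$, regarded as a $2N\times 2N$ matrix over $\mathcal F=C((t))$ in the defining representation of $\mathfrak{so}(2N)$. By definition the Newton polygon of $\charpol(\phi)$ is the lower convex hull of the points $(\mathrm{val}(u_k),\,2N-k)$, where $u_k$ is the coefficient of $\lambda^{2N-k}$, and the standard theory of Newton polygons identifies its slopes with the valuations of the $2N$ roots $\lambda_i(t)$, which live in Puiseux series over $\mathcal F$. Thus the lemma reduces to two assertions: that the generic member of the slice has eigenvalues of valuation $1/p_i$, occurring with multiplicity $p_i$ for each part $p_i$ of $[p_j]$, and that no member of the slice produces smaller valuations.

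First I would establish the generic valuation spectrum from the Jordan type of $\mathfrak n$. Since $\phi\equiv\mathfrak n\pmod t$, the reduction of $\charpol(\phi)$ mod $t$ is $\lambda^{2N}$, so all eigenvalues acquire positive valuation as $t\to 0$, block by block. The model computation is a single Jordan block $J_p$: for a generic regular perturbation one has $\det(J_p+tX)=c\,t+O(t^2)$ with $c\neq 0$ (as $J_p$ has corank one), and the characteristic polynomial $\lambda^p-c\,t+\dots$ is Eisenstein-type, so its $p$ roots form a single ramified branch $\lambda\sim(c\,t)^{1/p}$. Hence each part $p_i$ contributes $p_i$ eigenvalues of valuation $1/p_i$. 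Assembling the blocks and using $1/p_1\le 1/p_2\le\cdots$, the minimal sum of $k$ eigenvalue-valuations is obtained by drawing from the largest blocks first: for $k=s_j=\sum_{i\le j}p_i$ this uses exactly the first $j$ blocks and gives $\sum_{i\le j}p_i\cdot(1/p_i)=j$, while for $s_{j-1}<k<s_j$ the value interpolates linearly with slope $1/p_j$. The hull therefore has vertices precisely at $(j,\,2N-s_j)$, which are the vertices of $NP([p_j])$.

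Two points remain, and they are where type-D enters. The first is essentially cosmetic: the eigenvalues of an $\mathfrak{so}(2N)$ element come in pairs $\pm\lambda_i$ of equal valuation, so $\charpol(\phi)$ is even and its constant term is the square $u_N^2$ of the Pfaffian; the valuation count is unaffected, and the vertex $(r,0)$ (with $r$ the number of parts) sits at $\mathrm{val}(\det)=r$, consistent with $u_N$ having valuation $r/2$, an integer because a $D$-partition of $2N$ has an even number of parts. The second, and the genuine obstacle, is to show that these generic valuations are actually realized \emph{within} the orthogonal slice $\mathfrak n+m\mathfrak g$: a priori the restriction to $\mathfrak{so}(2N)$ could force the leading coefficients of $NP([p_j])$ to vanish identically and push the polygon up. I would dispose of this by exhibiting a single $\phi$ in the slice that is regular semisimple over $\mathcal F$ with the stated eigenvalue valuations — equivalently, by checking that the leading coefficients $c_{2j},\tilde c$ are not identically zero on the slice — using that the closure of $\mathcal O_H$ meets the regular semisimple locus, and that the $\mathfrak{sl}_2$-triple grading attached to $\mathfrak n$ lets one gauge $\mathfrak n/t$ back to $\mathfrak n$ by $t^{h/2}$, making the leading behaviour of each Casimir computable. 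Once one member realizes $NP([p_j])$, upper-semicontinuity of the $t$-valuations across the slice supplies the matching lower bound for every $\phi$ (valuations can only jump up on special members), so $NP([p_j])$ is simultaneously achieved and never undercut. I expect the non-vanishing of the leading coefficients on the orthogonal subspace to be the only real work; it is exactly the input the later even- and odd-type constraints refine, but for the Newton polygon itself it suffices that these coefficients be generically nonzero, which the constraints — being relations such as $c_{2k}=a_k^2$ rather than vanishings — respect.
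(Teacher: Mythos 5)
There is a genuine gap, and it sits at the heart of your argument: the claim that ``each part $p_i$ contributes $p_i$ eigenvalues of valuation $1/p_i$'' is a type-A fact that fails in $\mathfrak{so}(2N)$. Already for the principal orbit $[3,1]$ in $\mathfrak{so}(4)\simeq\mathfrak{sl}_2\times\mathfrak{sl}_2$ one computes $\charpol(\phi)=\lambda^4-2(q_1+q_2)\lambda^2+(q_1-q_2)^2$ with $q_1,q_2$ generically of valuation $1$, so all four roots $\pm\sqrt{q_1}\pm\sqrt{q_2}$ have valuation $1/2$ --- there is no branch of valuation $1/3$. The general pattern is dictated by Spaltenstein's factorization (Theorem \ref{spaltenstein-big-theorem}): a pair of consecutive distinct odd parts $p_{2j-1}=2r+1>p_{2j}=2s+1$ produces irreducible factors $P_{\beta}(\lambda^2)$ of degrees $2r$ and $2s+2$ in $\lambda$, hence roots of valuation $1/(2r)$ and $1/(2s+2)$, not $1/(2r+1)$ and $1/(2s+1)$. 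Consequently the lower convex hull of the support of $\charpol(\phi)$ does \emph{not} have vertices at all the points $(j,2N-s_j)$: e.g.\ for $[5,3,2^2]$ the hull runs straight from $(0,12)$ to $(2,4)$ and the putative vertex $(1,7)$ is never attained (the coefficient of $\lambda^7$ vanishes identically since $\charpol$ is even in $\lambda$). The Lemma, in the paper's sense of ``corresponds to'' (support contained in the region of $NP([p_j])$), survives only because the true hull happens to lie weakly above $NP([p_j])$ and to differ from it only over odd powers of $\lambda$, where the coefficients vanish for parity reasons --- but that is precisely the content one must prove, and your eigenvalue count does not supply it. Your derivation of the lower bound on the coefficient valuations rests on the false valuation spectrum, so the containment is not actually established by the argument as written.

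Two smaller points. First, ``the closure of $\mathcal{O}_H$ meets the regular semisimple locus'' cannot be right as stated: the closure of a nilpotent orbit is contained in the nilpotent cone. What you need is that a generic element of $\mathfrak{n}+m\mathfrak{g}$ is regular semisimple over $\mathcal{F}$, which is true but requires the $t^{h/2}$-gauging argument you gesture at, carried out inside $\mathfrak{so}(2N)$ rather than $\mathfrak{gl}(2N)$. Second, for comparison, the paper does not prove this Lemma from scratch at all: it is quoted directly from Corollary 23 of Baraglia--Kamgarpour \cite{MR3815160}, whose proof proceeds through exactly the factorization data above rather than through a naive block-by-block eigenvalue count. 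If you want a self-contained proof, the right route is to read the zero orders $\chi_{2k}$ off the degrees and $t$-divisibility of the factors $P_{\alpha,i}$ and $P_{\beta,i}$ in Theorem \ref{spaltenstein-big-theorem}, which is where the type-D structure (evenness in $\lambda$, marked pairs) is correctly encoded.
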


\begin{proof} The Lemma follows directly from Cor 23 in \cite{MR3815160}.
\end{proof}

The space of $u_{2j}[[t]],u_N[[t]]$ determines what we called $\mathcal{B}^{\text{naive}}$. In type $D$, there are some additional constraints obeyed by the leading terms $c_{2j},\tilde{c}$. These additional constraints are the main subject matter of this paper. In subsequent sections, we will turn to studying these constraints from a local point of view. 

\subsection{Spaltenstein's factorization}\label{sec:Spaltenstein}

Our analysis will heavily rely on results  of Spaltenstein on factorization patterns of characteristic polynomials and the Kazhdan-Lusztig map. In particular, we need Theorem B and Proposition 6.4 from \cite{spaltenstein1988polynomials}. We specialize the theorem to the case of Higgs fields with residues in special orbits.

\begin{theorem} (Spaltenstein \cite{spaltenstein1988polynomials})
For any special nilpotent orbit $\mathcal{O}_H$ in $\mathfrak{so}_{2N}$, let $\charpol( \phi)$ be the characteristic polynomial built out of the Higgs field $\phi=t\phi'$ with local behaviour $\phi' = \mathfrak{n}/t + (reg)$ and $\mathfrak{n}$ being an element in the orbit $\mathcal{O}_H$. Then  $\charpol( \phi)$ has a factorization
\begin{equation}\label{eq:spaltfact}
\charpol(\phi)= \prod_i P_{\alpha,i}(\lambda)P_{\alpha,i}(-\lambda)\prod_j P_{\beta,j}(\lambda^2 ) 
\end{equation}
into irreducible polynomial factors labeled by a bi-partition $(\alpha,\beta)$ of $N$.
\label{spaltenstein-big-theorem}
\end{theorem}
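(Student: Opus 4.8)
The statement is a specialization of Spaltenstein's general factorization theorem, so my plan is to reduce to that result and then carry out the type-$D$ bookkeeping for special orbits. The natural starting point is the structural observation that, because $\phi = t\phi' \in \mathfrak{so}_{2N}$, its characteristic polynomial is even in $\lambda$, as already recorded in \eqref{charpolD}. Hence $\charpol(\phi)$ is invariant under the involution $\sigma:\lambda \mapsto -\lambda$, and $\sigma$ permutes the irreducible factors of $\charpol(\phi)$ over $\mathcal{F} = C((t))$ (if $P(\lambda)$ is an irreducible factor then so is $P(-\lambda)$, since $\charpol(\phi)(-\lambda)=\charpol(\phi)(\lambda)$). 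Every irreducible factor therefore falls into one of two classes: either $\sigma$ fixes it, forcing its root set to be closed under negation and hence $P$ to be, up to scalar, a polynomial in $\lambda^2$; or $\sigma$ moves it, so that $P(\lambda)$ and $P(-\lambda)$ are distinct irreducible factors that both divide $\charpol(\phi)$ and appear as a pair. This dichotomy is exactly the $\beta$-type ($P_{\beta,j}(\lambda^2)$) versus $\alpha$-type ($P_{\alpha,i}(\lambda)P_{\alpha,i}(-\lambda)$) split asserted in \eqref{eq:spaltfact}; the remaining content is that the factorization exists with irreducible factors whose degrees are packaged by a bipartition of $N$.

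To produce the factors and decide which class each lands in, I would pass to the algebraic closure $\overline{\mathcal{F}}$ and use the Newton polygon. By Lemma \ref{lemma1} the Newton polygon of $\charpol(\phi)$ is $NP([p])$, so the $t$-adic valuations of the roots (as Puiseux series) are read off from its slopes, and the roots fall into packets of common valuation. Since $\sigma$ preserves valuations, it permutes packets of equal valuation. Inside a totally ramified packet of size $\ell$, whose roots have the form $\xi\zeta_\ell^k$ with $\zeta_\ell$ a primitive $\ell$-th root of unity, $\sigma$ acts by $\xi\zeta_\ell^k \mapsto -\xi\zeta_\ell^k$; this stabilizes the packet exactly when $-1 \in \langle \zeta_\ell\rangle$, i.e. when $\ell$ is even — giving a $\sigma$-fixed $\lambda^2$-factor ($\beta$-type) — and otherwise carries the packet to a distinct packet of the same valuation, giving an $\alpha$-type pair. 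This is the mechanism behind the two kinds of factors in \eqref{eq:spaltfact}, and in particular it shows that the $\beta$-factors are genuinely polynomials in $\lambda^2$.

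The remaining and least formal step is to pin down the actual bipartition $(\alpha,\beta)\vdash N$ and to upgrade irreducibility over $\overline{\mathcal{F}}$ to irreducibility over $\mathcal{F}$. Here I would invoke Spaltenstein's Theorem B together with Proposition 6.4 of \cite{spaltenstein1988polynomials}, which identify the image of $\mathcal{O}_H$ under the type-$D$ Kazhdan--Lusztig map with a conjugacy class in the Weyl group $W(D_N)$: its positive cycles assemble into $\alpha$ and its negative cycles into $\beta$, and the cycle lengths give the factor degrees. The degree count $2\sum_i \alpha_i + 2\sum_j \beta_j = 2N$ then confirms that $(\alpha,\beta)$ is a bipartition of $N$, consistent with the constant term $u_N^2$ of \eqref{charpolD} being a perfect square (the Pfaffian). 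The special-orbit hypothesis of Definition \ref{def_special_orbit} is what I expect to carry the real weight in this step: it is what guarantees that the Kazhdan--Lusztig image is the ``clean'' class whose cycle data reproduces exactly the listed factors, with each $P_{\alpha,i}$ and each $P_{\beta,j}(\lambda^2)$ irreducible and with no further collapsing or merging of packets.

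I expect the main obstacle to be precisely this matching. Exhibiting the $\sigma$-symmetry and the coarse packet structure is routine, but proving that each packet is a single Galois orbit over $\mathcal{F}$ (so the factors really are irreducible, and $P_{\beta,j}(\lambda^2)$ does not split further on passing from $\mu=\lambda^2$ back to $\lambda$), and that the combinatorics of Proposition 6.4 reproduces the $\alpha/\beta$ assignment for all special $[p]$, including parts of high multiplicity and the $N$ odd Pfaffian subtlety, is where the care is needed. I would also treat the potential ``very even'' degeneracy separately (when all parts of $[p]$ are even and the D-partition labels two distinct orbits, cf. the footnote to Definition \ref{def_special_orbit}), to ensure the factorization in \eqref{eq:spaltfact} is stated unambiguously.
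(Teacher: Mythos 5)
The paper offers no proof of this theorem: it is imported verbatim from Spaltenstein, with the text immediately before it stating that Theorem B and Proposition 6.4 of \cite{spaltenstein1988polynomials} are the needed inputs, exactly the two results your argument ultimately defers to for the identification of the bipartition and the irreducibility over $\mathcal{F}$. Your added scaffolding (the $\lambda\mapsto-\lambda$ involution dichotomy forcing factors to pair up or be polynomials in $\lambda^2$, and the Newton-polygon packet structure via Lemma \ref{lemma1}) is correct and consistent with the paper's subsequent monodromy discussion, but the essential content rests on the same citation in both cases, so this is essentially the paper's approach.
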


First, the bi-partition $(\alpha,\beta)$ is determined from the special partition $[p]$ in the following way

\begin{itemize}%
\item When $p_{2j-1}=p_{2j}$, then $\alpha$ has one part equal to $p_{2j-1}$.
\item When $p_{2j-1}=2r+1$ and $p_{2j}=2s+1$, with $r>s$, then $\beta$ has two parts: $r$ and $s+1$.
\end{itemize}
Then, the irreducible factors are determined from the parts of $(\alpha,\beta)$ according to :

\begin{itemize}%
\item Corresponding to every part $\alpha_i$ of $[\alpha]$, there is a pair of irreducible factors: $P_{\alpha,i}(\lambda), P_{\alpha,i}(-\lambda)$, of $\charpol( \phi)$ each of degree $\alpha_i$ where
\[P_{\alpha,i}(\lambda) := (\lambda^{\alpha_i} + t e_1^{(i)}[[t]]\lambda^{{\alpha_i}-1} +  t e_2^{(i)}[[t]]\lambda^{{\alpha_i}-2} +  \cdots + t e_{{\alpha_i}}^{(i)}[[t]]).\]

\item Corresponding to every part $\beta_i$ of $\beta$, there is a single irreducible factor, {\it i.e.} a polynomial $P_{\beta,i}(\lambda^2)$, of degree $\beta_i$. 
\[P_{\beta,i}(\lambda^2 ) := (\lambda^{2{\beta_i}} +  t f_2^{(i)}[[t]] \lambda^{2{\beta_i} -2} + t f_4^{(i)}[[t]] \lambda^{2{\beta_i} -4} +\cdots + t f_{2{\beta_i}}^{(i)}[[t]]  ).\]

\end{itemize}

The coefficients $e_n[[t]]$ and $f_n[[t]]$ are formal power series in $t$. The coefficients in those power series are, in general, algebraic (i.e., not necessarily single-valued) functions on the Lie algebra. Of course, $\charpol( \phi)$ is single-valued. In fact, its coefficients (as a double power series in $\lambda$ and $t$) are polynomials on the Lie algebra. Thus, whatever monodromies are experienced by the coefficient functions $e_n[[t]]$ and $f_n[[t]]$ must be such as to exchange the polynomial factors in \eqref{eq:spaltfact}. The condition that the partition be special implies that no $\alpha_i$ can equal $ 2 \beta_j$, so no $P_{\alpha}(\lambda)$ can have the same degree as a $P_{\beta}(\lambda^2)$. In particular, monodromy permutes the $P_{\alpha}$'s and $e$'s among themselves, and similarly for the $P_{\beta}$'s and $f$'s, with no mixing.

The parts of $[\beta]$ can only occur with multiplicity 2 or 1. For $[p]$ special, the multiplicity is 2 iff $p_{2j-1}=p_{2j}+2=2r+1$, in which case  $\beta_i=\beta_{i+1}=r$, for some $i$. In this case, it is the product $P_{\beta,i}(\lambda^2 )P_{\beta,i+1}(\lambda^2 )$ that is carried into itself under monodromy. Hence (in particular) $f^{(i)}_{2r}f^{(i+1)}_{2r}$ is a rational function on the Lie algebra.

When $p_{2j-1}=p_{2j}+2l=2r+1$ (for some $l>1$), then $P_{\beta,i}(\lambda^2)$ and $P_{\beta,i+1}(\lambda^2)$ are separately mondromy-invariant and hence $f^{(i)}_{2r}[[t]]$ and $f^{(i+1)}_{2(r-l+1)}[[t]]$ are separately rational functions on the Lie algebra.

If the parts of $[\alpha]$ are all distinct, then the monodromy can \emph{at worst} exchange $P_{\alpha,i}(\lambda)\leftrightarrow P_{\alpha,i}(-\lambda)$. In other words, it could act on the $e_n[[t]]$ as $e_n[[t]]\to (-1)^n e_n[[t]]$. So the (coefficients of powers of $t$ in) $e_{2n}[[t]]$
are rational functions on the Lie algebra, whereas the coefficients in  $e_{2n+1}[[t]]$ can have the form of a rational function times the square-root of a polynomial.

When the parts of $[\alpha]$ appear with multiplicity greater than 1,  then more complicated monodromy patterns can appear. Let $[p]=[\dots,(2r)^{2s},\dots]$, and let the corresponding parts of $[\alpha]$ be $\alpha_i=\alpha_{i+1}=\dots=\alpha_{i+s-1}=2r$. Then Spaltenstein's factorization contains a product of the form
\[
\prod_{j=0}^{s-1} P_{\alpha,{i+j}}(\lambda) P_{\alpha,{i+j}}(-\lambda)
\]
of polynomials, each of which has degree $2r$. These can have monodromies among themselves, but the product
\[
  \prod_{j=0}^{s-1} e_{2r}^{(i+j)}[[t]]
\]
is monodromy-invariant, and hence a rational function on the Lie algebra. To see this, simply note that $Q_{2r}(\lambda)=\prod_{j=0}^{s-1} P_{\alpha_{i-j}}(\lambda)$ can only be exchanged with $Q_{2r}(-\lambda)$ under monodromy and that $\prod_{j=0}^{s-1} e_{2r}^{(i+j)}[[t]]$ is the coefficient of $\lambda^0$ in both $Q_{2r}(\lambda)$ and $Q_{2r}(-\lambda)$.

Similarly, if $[p]=[\dots,(2r+1)^{4s},\dots]$, then Spaltenstein's factorization contains a product of the form
\[
\prod_{j=0}^{2s-1} P_{\alpha,{i+j}}(\lambda) P_{\alpha,{i+j}}(-\lambda)
\]
of polynomials of degree $2r+1$. Again, these polynomials can have monodromies among themselves, but the product
\[
  \prod_{j=0}^{2s-1} e_{2r+1}^{(i+j)}[[t]]
\]
is monodromy-invariant, and hence a rational function on the Lie algebra. 

\noindent {\bf Summary:} The following products of coefficients are rational functions on the Lie algebra:
\begin{itemize}
\item   $f^{(2i-1)}_{2r}[[t]]$ and $f^{(2i)}_{2(r-l+1)}[[t]]$ when $p_{2j-1}=p_{2j}+2l=2r+1$ (for some $l>1$)
\item $f^{(2i-1)}_{2r}f^{(2i)}_{2r}$ when $p_{2j-1}=p_{2j}+2=2r+1$.
\item $e_{2n}[[t]]$ if the parts of $[\alpha]$ are all distinct.
\item $\prod_{j=0}^{s-1} e_{2r}^{(i+j)}[[t]]$ if $[p]=[\dots,(2r)^{2s},\dots]$.
\item   $\prod_{j=0}^{2s-1} e_{2r+1}^{(i+j)}[[t]]$ if 
$[p]=[\dots,(2r+1)^{4s},\dots]$.
\end{itemize}

The conjecture of the next section is, in essence, the assertion that certain monodromy patterns do not, in fact, occur and that certain additional coefficients (or products of coefficients) in the $e^{(i)}_n[[t]]$ and $f^{(i)}_{2n}[[t]]$ are rational functions on the Lie algebra, even though the factorization \eqref{eq:spaltfact} would allow them to have nontrivial monodromies.

\section{A Conjecture}\label{sec:main_conjecture}
Given a special D-partition, $[p]$ and a nilpotent, $X$, in the corresponding nilpotent orbit, let us write the characteristic polynomial $\charpol(\phi)=\det(X+tg-\lambda I)$, $g\in\mathfrak{so}(2N)$ as
\begin{equation}\label{charpolexp}
\charpol(\phi)
=\lambda^{2N}+\sum_{k=1}^N t^{\chi_{2k}} \bigl(c_{2k}+O(t)\bigr)\lambda^{2(N-k)}
\end{equation}
 Here, if $[p]$ has parts $p_i$ then
\[
\chi_{2k}=j,\qquad \text{where}\quad \sum_{i=1}^{j-1}p_i < 2k\leq \sum_{i=1}^j p_i
\]
A-priori, the $c_{2k}$ are polynomials on the Lie algebra, of degree $\chi_{2k}$. In certain cases, a stronger result obtains.
\begin{conjecture}\label{Simpleconjecture}
Let $[p]$ be a special D-partition. For every part $p_{2i}$ such that $p_{2i}>p_{2i+1}$,  let $2k_i=\sum_{j=1}^{2i}p_j$. Then $c_{2k_i}=(a_{k_i})^2$ with $a_{k_i}$ a polynomial on the Lie algebra.
\end{conjecture}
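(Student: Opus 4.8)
The plan is to use Spaltenstein's factorization \eqref{eq:spaltfact} to pin down $c_{2k_i}$ exactly and then to recognize it as a Pfaffian square. First I would prove a clean-break lemma: when $p_{2i}>p_{2i+1}$, every irreducible factor of \eqref{eq:spaltfact} coming from one of the first $i$ pairs $(p_1,p_2),\dots,(p_{2i-1},p_{2i})$ has strictly larger $\lambda$-degree than every factor coming from a later pair. This is an elementary check on degrees: an equal pair $(a,a)$ produces two factors of degree $a$, while an unequal odd pair $(2r+1,2s+1)$ produces factors of degrees $2r$ and $2s+2$, so the factor degrees attached to pair $j$ always lie in $[p_{2j},p_{2j-1}]$, and the strict inequality $p_{2i}>p_{2i+1}$ opens a genuine gap at the cut. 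Because every non-leading coefficient in \eqref{eq:spaltfact} carries at least one power of $t$, and $\chi_{2k_i}=2i$ (from Lemma \ref{lemma1} together with the definition of $\chi$), the monomial $t^{2i}\lambda^{2(N-k_i)}$ can be assembled in exactly one way: take the constant term of each of the $2i$ factors attached to the first $i$ pairs, and the leading term of every remaining factor. Hence $c_{2k_i}$ is precisely the product, over the first $i$ pairs, of the leading $t$-coefficients of those constant terms.

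Reading off that product, an equal pair contributes the square of a single coefficient (the two factors $P_{\alpha}(\lambda)$ and $P_{\alpha}(-\lambda)$ have equal constant terms), whereas an unequal pair contributes a product of two \emph{distinct} constant-term coefficients of the $P_{\beta}(\lambda^2)$'s. Thus, even on the cover where the factorization is single-valued, $c_{2k_i}$ is a product of squared and unsquared coefficients and is \emph{not} manifestly a perfect square. The content of Conjecture \ref{Simpleconjecture} is that this single-valued polynomial is nevertheless a perfect square in $\mathbb{C}[\mathfrak{so}(2N)]$; equivalently, that a square root $a_{k_i}$ exists which is itself invariant under the monodromy of \eqref{eq:spaltfact}, even though the factorization only forces $a_{k_i}^2=c_{2k_i}$ to be invariant. (For $[p]=[3,3,1,1]$ and $i=1$ one has $c_6=(e_3)^2$, where \eqref{eq:spaltfact} a priori permits the monodromy $e_3\mapsto-e_3$; the conjecture asserts $e_3$ is in fact single-valued.)

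To exhibit the square root I would reduce each corner to a constant term of a smaller orthogonal system, where the square structure is tautological. Concretely: (i) check that the truncation $[p_1,\dots,p_{2i}]$ is a special D-partition of $2k_i$; (ii) build from the Jordan filtration of $\mathfrak{n}$ an $\mathfrak{so}(2k_i)$ sub-system $X'+tg'$ whose spectral data captures the eigenvalues of $X+tg$ associated with the large blocks $p_1,\dots,p_{2i}$ as $t\to0$; and (iii) show that $c_{2k_i}$ equals the leading $t$-coefficient of the constant term of the sub-system's characteristic polynomial. Step (iii) is the crux, because once it holds the conclusion is immediate: the constant term of any $\mathfrak{so}(2k_i)$ characteristic polynomial is identically the square of the Pfaffian $u^{\mathrm{sub}}_{k_i}$, so $c_{2k_i}=(\tilde{c}^{\,\mathrm{sub}})^2$ with $\tilde{c}^{\,\mathrm{sub}}$ the leading $t$-coefficient of that Pfaffian, a single-valued polynomial of degree $i=\chi_{2k_i}/2$. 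This is exactly the sought $a_{k_i}$, and it is the direct generalization of the base case $c_{2N}=\tilde{c}^{\,2}$ already visible in \eqref{charpolD}.

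The hard part will be the decoupling in step (iii). The determinant-equals-Pfaffian-squared identity is trivial; all the difficulty lies in proving that the leading $t$-coefficient of the full $\mathfrak{so}(2N)$ characteristic polynomial at this corner is computed by the single decoupled $\mathfrak{so}(2k_i)$ block, with no monodromic cross-contributions from the remaining Jordan blocks. In the language of \eqref{eq:spaltfact} this is the statement that the coefficients $f_{2r}$ attached to unequal pairs, which are a priori only algebraic functions on $\mathfrak{so}(2N)$, actually assemble into the Pfaffian of the block; Spaltenstein's factorization supplies no such formula, and one must control the regular part $g$ under restriction and prove that it does not perturb the leading coefficient. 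Making this $\mathfrak{sl}_2$-graded degeneration precise, and tracking $g$ through it, is the step I expect to resist a clean argument, and it is precisely why the statement remains a conjecture rather than a theorem.
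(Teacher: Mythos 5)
You have not produced a proof, and you say so yourself: step (iii), the claim that the leading $t$-coefficient of $\charpol(\phi)$ at the corner $(\chi_{2k_i},2(N-k_i))$ is computed by a decoupled $\mathfrak{so}(2k_i)$ sub-system whose constant term is tautologically a Pfaffian squared, is precisely the missing content. Everything before that point is correct and is consistent with the paper's Section \ref{sec:Spaltenstein}: the degree gap forced by $p_{2i}>p_{2i+1}$ does isolate the factors of \eqref{eq:spaltfact} coming from the first $i$ pairs, and $c_{2k_i}$ is indeed the product of the leading $t$-coefficients of their constant terms --- a product of manifest squares (from equal pairs) and of \emph{unsquared} cross-terms $f^{(\cdot)}_{2r}f^{(\cdot)}_{2s+2}$ (from unequal odd pairs), each a priori only algebraic. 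But reducing the conjecture to ``this product is single-valued and a perfect square'' is a reformulation, not a resolution; the proposed $\mathfrak{sl}_2$-graded degeneration to a sub-orthogonal system is exactly the kind of statement Spaltenstein's factorization does not supply, and you give no mechanism to control the regular part $g$ under the restriction or to rule out cross-contributions between Jordan blocks. So the gap is real and is located where you say it is.

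You should be aware that the paper does not prove this statement either --- it is stated as Conjecture \ref{Simpleconjecture}, supported by Proposition \ref{mult4} (cases where the leading or trailing parts are all even, or where all odd parts occur with multiplicity divisible by $4$) and by exhaustive checks for $N\leq 7$. Where the paper can prove something, it does so by a mechanism different from your sub-Pfaffian reduction: parts (a) and (b) of Proposition \ref{mult4} solve the even-type constraint recursion of Theorem \ref{th:BKCD} to exhibit the candidate square root $a_{k_i}$ as a \emph{rational} function on the Lie algebra whose square is the polynomial $c_{2k_i}$, and then conclude it is a polynomial; part (c) instead uses the monodromy-invariance of the products $\prod_j e^{(i+j)}_{2r}$ and $\prod_j e^{(i+j)}_{2r+1}$ established at the end of Section \ref{sec:Spaltenstein}. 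The cases left open by the paper (odd parts of multiplicity $4s+2$ or $2s+1$) are exactly those where no such rationality statement is available, i.e.\ exactly where your step (iii) would have to do real work. Your approach, if completed, would be genuinely different from and stronger than the paper's partial arguments, but as written it establishes nothing beyond what the monodromy analysis of Section \ref{sec:Spaltenstein} already gives.
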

An obvious special case is when $p_{2i}$ is the last part of $[p]$, {\it i.e.~} when $k=N$. Then $c_{2N}$ is the square of the Pfaffian.  More generally, we have:

\begin{prop}\label{mult4}
Conjecture \ref{Simpleconjecture} holds in the following cases:
\begin{itemize}
\item[a)] If $p_1,p_2,\dots, p_{2i}$ are all even then the Conjecture holds for $c_{2k_i}$.
\item[b)] If all of the parts $p_j$, for $j>2i$ are even, then the Conjecture hold for $c_{2k_i}$.
\item[c)] If all odd parts of $[p]$ occur with multiplicity a multiple of 4, then Conjecture \ref{Simpleconjecture} holds, with $a_{k_i} =\prod_{j=1}^{i} e^{(j)}_{p_{2j}}$. 
\end{itemize}
\end{prop}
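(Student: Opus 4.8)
The plan is to read $c_{2k_i}$ directly off Spaltenstein's factorization (Theorem \ref{spaltenstein-big-theorem}) by splitting it at the index $2i$. First I would group the irreducible factors according to the consecutive pairs $(p_{2j-1},p_{2j})$ of the partition: each pair contributes either $P_{\alpha,j}(\lambda)P_{\alpha,j}(-\lambda)$ (when $p_{2j-1}=p_{2j}$) or a product $P_{\beta}(\lambda^2)P_{\beta'}(\lambda^2)$ (when $p_{2j-1}>p_{2j}$ are both odd), and in either case the resulting factor is even in $\lambda$ with $\lambda$-degree $d_j:=p_{2j-1}+p_{2j}$. Since $\sum_{j\le i}d_j=\sum_{j\le 2i}p_j=2k_i$, I would write $\charpol(\phi)=A(\lambda)B(\lambda)$, where $A$ is the product of the factors coming from the first $i$ pairs ($\deg A=2k_i$) and $B$ is the product of the remaining factors ($\deg B=2N-2k_i$).

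The core claim is that $c_{2k_i}$ equals the leading coefficient (in $t$) of the constant term $A(0)$. The constant term of each pair-factor carries exactly two explicit powers of $t$ (for instance $P_{\alpha,j}(0)P_{\alpha,j}(-0)=t^2(e^{(j)}_{\alpha_j})^2$), so $A(0)$ has $t$-order $2i=\chi_{2k_i}$, matching Lemma \ref{lemma1}. Writing $[\charpol(\phi)]_{\lambda^{2N-2k_i}}=A(0)+\sum_{u>0}[A]_{\lambda^u}[B]_{\lambda^{2N-2k_i-u}}$, I would estimate the $t$-orders of the cross terms: lowering $A$ draws on the smallest of the first $i$ factor-degrees ($d_i$), while lowering $B$ draws on the largest of the remaining ones ($d_{i+1}$), and the hypothesis $p_{2i}>p_{2i+1}$ forces $d_i>d_{i+1}$. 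A slope (Newton-polygon) estimate on each factor then shows every cross term has $t$-order strictly greater than $2i$, so no competing contribution or cancellation occurs and $c_{2k_i}$ is exactly the leading coefficient of $A(0)$. This domination step is the part I expect to be the main obstacle: one must control the integer $t$-orders of the \emph{intermediate} coefficients of each factor type (equal-even, equal-odd, and strict-odd $\beta$-factors), not merely their linear slope approximations, and confirm there is genuinely no tie at order $2i$.

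Granting the core claim, the three cases reduce to exhibiting $A(0)$ as a square. In cases (a) and (c) the first $i$ pairs contain no strict-odd pair --- in (a) because $p_1,\dots,p_{2i}$ are even, in (c) because every odd part has multiplicity a multiple of $4$ and hence splits into equal odd pairs --- so each factor of $A$ has the form $P_{\alpha,j}(\lambda)P_{\alpha,j}(-\lambda)$ and $A(\lambda)=S(\lambda)S(-\lambda)$ with $S(\lambda)=\prod_{j\le i}P_{\alpha,j}(\lambda)$. Then $A(0)=S(0)^2$, and its leading coefficient is $\bigl(\prod_{j\le i}e^{(j)}_{p_{2j}}\bigr)^2$, giving $a_{k_i}=\prod_{j\le i}e^{(j)}_{p_{2j}}$ as claimed. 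In case (b) the first pairs may contain strict-odd pairs, but now all parts beyond $2i$ are even, so $B(\lambda)=R(\lambda)R(-\lambda)$ with $R(\lambda)=\prod_{j>i}P_{\alpha,j}(\lambda)$, whence $B(0)=R(0)^2$. Combining this with the fact that the constant term of the type-D characteristic polynomial is always the square of the Pfaffian, $\charpol(\phi)|_{\lambda=0}=u_N^2$, yields $A(0)=\charpol(\phi)|_{\lambda=0}/B(0)=(u_N/R(0))^2$, again a perfect square.

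Finally I would argue that the square root $a_{k_i}$ is a genuine polynomial on the Lie algebra. In every case the relevant square root is a priori only a rational (single-valued) function: in (a) and (c) the product $\prod_{j\le i}e^{(j)}_{p_{2j}}$ is monodromy-invariant by the rationality statements collected in the Summary at the end of Section \ref{sec:Spaltenstein} (the first $i$ pairs exhaust complete multiplicity-groups of each part, so these products fall into exactly the monodromy-invariant combinations listed there), while in (b) both $u_N$ and $R(0)$ are single-valued. Since $c_{2k_i}$ is itself a polynomial on the Lie algebra and equals the square of this rational function, unique factorization forces the rational function to be polynomial, which completes the proof.
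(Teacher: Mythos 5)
Your proposal is essentially correct, but it takes a genuinely different route from the paper's. The paper disposes of (a) and (b) by pointing to the proof of Theorem \ref{th:BKCD}: there the $a$'s are produced by successively solving the perfect-square constraint equations \eqref{asquare}, and polynomiality of $a_{k_i}$ is extracted either by downward elimination from the monic top of an initial even block (giving (a)) or by dividing through by powers of the Pfaffian coefficient $\tilde{c}$ and observing that $c_{2k_i}=a_{k_i}^2$ forces the rational function $a_{k_i}$ to be denominator-free (giving (b)); only (c) is argued directly from Spaltenstein's factorization and the monodromy-invariance summary of Section \ref{sec:Spaltenstein}. You instead treat all three cases by one mechanism: split $\charpol(\phi)=A(\lambda)B(\lambda)$ at the $i$-th pair, identify $c_{2k_i}$ with the leading $t$-coefficient of $A(0)$, and exhibit that coefficient as the square of a rational function via $A(0)=S(0)^2$ in (a),(c) and $A(0)=\bigl(u_N/R(0)\bigr)^2$ in (b). The final step (a rational function whose square is a polynomial is a polynomial) coincides with the paper's. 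Your version is more unified and self-contained; its cost is the leading-coefficient identification, which the paper's arguments for (a),(b) avoid by working with the already-derived constraint equations --- though note that the paper's proof of (c) tacitly uses the very same identification when it asserts that the square of $\prod_{j\le i}e^{(j)}_{p_{2j}}$ is a polynomial (namely $c_{2k_i}$), so making it explicit is a genuine improvement.

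The domination step you flag as the main obstacle does close, and with less than you fear: you never need the integer $t$-orders of intermediate coefficients beyond the fact, explicit in Theorem \ref{spaltenstein-big-theorem}, that every non-leading coefficient of every irreducible factor lies in $t\mathcal{O}$. The decisive observation is that \emph{every} factor of $A$ has degree $\ge p_{2i}$ (an equal pair $(p_{2j-1},p_{2j})$ with $j\le i$ contributes degrees $p_{2j-1}=p_{2j}\ge p_{2i}$, a strict-odd pair contributes degrees $p_{2j-1}-1$ and $p_{2j}+1$, both $\ge p_{2i}+1$), while \emph{every} factor of $B$ has degree $\le p_{2i+1}<p_{2i}$. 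For a cross term $[A]_{\lambda^{u}}[B]_{\lambda^{2N-2k_i-u}}$ with $u>0$: producing a $\lambda$-drop of $u$ in $B$ activates at least $\lceil u/p_{2i+1}\rceil$ factors of $B$, each contributing $t$-order at least $1$; producing a $\lambda$-drop of $2k_i-u$ in $A$ leaves inactive a set of factors of total degree at most $u$, hence at most $\lfloor u/p_{2i}\rfloor$ of them, so at least $2i-\lfloor u/p_{2i}\rfloor$ factors of $A$ are activated. Since $p_{2i}>p_{2i+1}$ gives $\lceil u/p_{2i+1}\rceil>\lfloor u/p_{2i}\rfloor$ for all $u>0$, every cross term has $t$-order strictly greater than $2i=\chi_{2k_i}$, and $c_{2k_i}$ is exactly the leading coefficient of $A(0)$, with no possibility of a tie or cancellation. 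With that lemma in hand, the rest of your argument (the pairing of factors, the complete-multiplicity-group rationality in (a) and (c), and the Pfaffian quotient in (b)) is correct as written.
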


\begin{proof}
Part (a) follows from the proof of part (i) of Theorem \ref{th:BKCD} below. Part (b) follows from the proof of part (iii) of Theorem \ref{th:BKCD}. For Part (c), the individual $e^{(j)}_{p_{2j}}$ are not necessarily rational functions on the Lie algebra. Nevertheless, for each even part of $p$ (occurring with multiplicity $2s$) and for each odd part (occurring with multiplicity $4s$), we showed at the end of Section \ref{sec:Spaltenstein}  that the \emph{product} of the corresponding $e^{(j)}_{p_{2j}}$  is rational. Taking the product of these rational functions, we obtain a rational function whose \emph{square} is a polynomial on the Lie algebra. Hence it, itself, is a polynomial on the Lie algebra.
\end{proof}

We have not proven the Conjecture for general $[p]$ with odd parts whose multiplicity is $4s+2$ or $2s+1$. Nevertheless, Chacaltana-Distler (\cite{Chacaltana:2011ze} and unpublished) checked the Conjecture explicitly for all nilpotent orbits in $\mathfrak{so}(2N)$ for $N\leq 7$.  We have spot-checked a large number of cases for higher $N$.

%

\section{Even and Very Even Type Constraints}
\label{sec:even_and_very_even}

\subsection{Work of Baraglia-Kamgarpour}
\label{sec:spaltenstein}

Now, we apply Spaltenstein's result (Theorem \ref{spaltenstein-big-theorem}) to study the Hitchin image in the presence of even parts in the Hitchin partition. This discussion will closely follow the work of \cite{MR3815160}. Later, we will emphasize how the approach of \cite{Chacaltana:2011ze} differs from that of \cite{MR3815160}. Let $[p]$ be the Hitchin partition. For $[p]$ a special orbit in type $D$, Theorem \ref{spaltenstein-big-theorem} implies that
\begin{displaymath}
\charpol(\phi) = \mathcal{P}^{odd} \prod_{i|\alpha_i\;\text{even}} P_{\alpha,i}(\lambda)P_{\alpha,i}(-\lambda)
\end{displaymath}
where $\mathcal{P}^{odd}$ contains all of the factors in \eqref{eq:spaltfact} corresponding to the odd parts of $[p]$. An even part $[p]=[\dots,(2r)^{2s},\dots]$ corresponds to $[\alpha]=[\dots,(2r)^s,\dots]$, and the product is over the even parts of $[\alpha]$.



Now, following \cite{MR3815160}, for each \emph{distinct} even part $r_d$ of $[p]$, occurring with multiplicity $2e_d$,  let us \emph{define} a polynomial $Q$ that is built out of the leading terms corresponding to the points on the line segments of slope $-r_d$, in the Newton polygon $NP([r_d])$. We will have $2e_d+1$ points on this line segment. Let their co-ordinates be $\alpha_s,\beta_s$ for $s=0,\ldots 2e_d$.

\begin{displaymath}
Q_d(u) =  \sum_{s=0}^{2e_d} \rho_{\alpha_s,\beta_s} u^{2e_d - s}
\end{displaymath}
Note that when the first part in the Hitchin partition is even, the first term in the polynomial is just $u^{2e_d}$ since $\rho_{(0,N)}=1$ by definition.

For special nilpotents, $p_{2i-1},p_{2i}$ have the same parity. In other words, even and odd parts occur in pairs. As derived in the proof of Prop 34 in \cite{MR3815160}, this implies that the polynomial $Q_d$ is a perfect square. That is, we have an equation of the form

\begin{equation}
\sum_{s=0}^{2e_d} \rho_{\alpha_s,\beta_s} u^{2e_d - s} = {\Bigl(\sum_{n=0}^{e_d} \hat{a}_n u^{e_d-n}\Bigr)}^2
\end{equation}

Observe that the coefficients on the LHS involve just the leading terms in the $t$ expansion of the coefficients in $\charpol(\phi)$. So, we could rewrite the above equation using notation introduced at the end of Section \ref{np-section}.
\begin{equation}
\label{asquare}
    \sum_{s=0}^{2e_d} c_{2k +s r_d }u^{2e_d - s} = {\Bigl(\sum_{n=0}^{e_d} a_{k+n r_d} u^{e_d-n}\Bigr)}^2
\end{equation}
where $2k$ is the following partial sum of the Hitchin partition $[p_j]$ 
\begin{equation}
    2k = \sum_{j\;|\; p_j > r_d} p_j
 \end{equation}
and $\hat{a}_n(t)= a_{k+n r_d}t^{\frac{1}{2}\chi_{2k+2n r_d}}+\dots$ for some $t$-independent $a_{k+n r_d}$.

Since there are fewer independent coefficients on the polynomial on the RHS, this implies relations among the $c_{2k+s r_d }$ on the LHS (called c-type constraints in CD).
We will henceforth use the notation $c_i$ to refer to these leading coefficients where the $i$ subscript encodes the $\mathbb{C}^\star$ weight of the coefficient in the graded decomposition of the local Hitchin image. 

We can use \eqref{asquare} to \emph{solve} for as many of the $a_n$s as we can. Here is where CD and BK part company. BK eliminate the remaining (unsolved-for) $a_n$s, whereas CD treat them as independent gauge-invariant polynomials in the coefficients in the Higgs field, hence as generators of the local Hitchin base.

\subsection{Relationship between Chacaltana-Distler and Baraglia-Kamgarpour}
\label{sec:relBKCD}

We summarise the relationship between CD's description of the local Hitchin image and BK's description in the following theorem.

\begin{defn}
    Given a Hitchin partition $[p]$, decompose it into blocks, with one block for
    \begin{itemize}
        \item each odd part (counted with multiplicity)
        \item each string of consecutive even parts
    \end{itemize}
    For both BK and CD, the image of the local Hitchin maps is a Cartesian product of factors, one factor for each block in the above decomposition.
    \begin{equation}\label{blockdecomp}S_{BK}=\prod_i S_{BK}^{(i)},\qquad S_{CD}=\prod_i S_{CD}^{(i)}\end{equation}
    The $S^{(i)}$ will be referred to as the \emph{factors} in the local Hitchin image.
\end{defn}

Consider the factor $S^{(i)}$ corresponding to a particular string of consecutive even parts in $[p]$.
\begin{defn}
(BK's Hitchin image) BK describe $S_{BK}^{(i)}$ as the space of $\rho_{\alpha_s,\beta_s}$ subject to the constraints arising from equation \eqref{asquare}. It describes a (possibly singular) subspace of co-dimension $e_j/2$ in $\mathbb{C}^{e_j+1}$. 
\end{defn}

\begin{defn}
 (CD's Hitchin image) CD's description, $S_{CD}^{(i)}$, amounts to taking the space of $a_n$ as the local Hitchin image. It is manifestly a vector space of dimension $e_j/2+1$.

\end{defn}
\begin{remark}
        The map $\gamma: S_{CD}\to S_{BK}$ respects the decomposition \eqref{blockdecomp}. That is, it consists of a collection of maps $\gamma^{(i)}: S_{CD}^{(i)}\to S_{BK}^{(i)}$.
\end{remark}

\begin{theorem}\label{th:BKCD}
Let $\gamma :S_{CD} \rightarrow S_{BK}$ be the map between the two descriptions of the local Hitchin image and let $\gamma^{(i)}: S_{CD}^{(i)}\to S_{BK}^{(i)}$ be the component corresponding to a particular string of consecutive even parts in $[p]$.
\begin{itemize}%
\item[i)] If the block occurs at the beginning of the partition and consists of either a single even part (with even multiplicity) or is a very-even partition of the form $[(2s)^{2l},(2r)^2]$, then the map $\gamma^{(1)}$ is an isomorphism. Both $S_{CD}^{(1)}$ and $S_{BK}^{(1)}$ are smooth (in fact, vector spaces). For any other string of even parts at the beginning of the partition,  $S_{BK}^{(1)}$ is singular and $\gamma^{(1)}$ is the normalization map. In particular, $S_{CD}^{(1)}$ and $S_{BK}^{(1)}$ are birational.
\item[ii)]When the block occurs at the end of the partition (but does not extend to the beginning of the partition), 
then $S_{BK}^{(i)}$ is singular and $\gamma^{(i)}$ is the normalization map. In particular, $S_{CD}^{(i)}$ and $S_{BK}^{(i)}$ are birational.
\item[iii)] When the block occurs in the middle of the partition (i.e.~not at the beginning or the end) of the partition, then Conjecture \ref{Simpleconjecture} implies that $\gamma^{(i)}$ is generically $2:1$. $S_{CD}^{(i)}$ is a vector space, $r: S_{CD}^{(i)}\circlearrowleft$ is reflection through the origin and $S_{CD}^{(i)}/r$ is birational to $S_{BK}^{(i)}$.
\end{itemize}
\label{CD-BK-theorem}
\end{theorem}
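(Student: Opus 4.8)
The plan is to reduce all three parts to the study of a single algebraic map, the \emph{coefficient-level squaring map}. For an even block, equation \eqref{asquare} exhibits the BK coordinates (the leading coefficients $\rho_{\alpha_s,\beta_s}=c_{2k+sr_d}$) as the coefficients of $A(u)^2$, where $A(u)=\sum_{n}a_{k+nr_d}u^{e_d-n}$ is built from the CD coordinates (the $a$'s). Thus $\gamma^{(i)}$ is, block-by-block, the map $\mathrm{Sq}:A\mapsto A^2$ read off at the level of coefficients. Since $\gamma$ respects the product decomposition \eqref{blockdecomp}, it suffices to analyze $\mathrm{Sq}$ for one block and to track two pieces of data: whether the leading coefficient $a_{0}=a_{k}$ is pinned, and whether the trailing coefficient is pinned.

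First I would isolate the key algebraic lemma about $\mathrm{Sq}$. If $A$ is forced to be monic ($a_0=1$), then comparing coefficients of $A^2=C$ gives $c_s = 2a_s + (\text{polynomial in }a_1,\dots,a_{s-1})$ for $s\le e_d$, so the $a_s$ are recovered \emph{polynomially} and recursively from $c_1,\dots,c_{e_d}$; the remaining $c_{e_d+1},\dots,c_{2e_d}$ are then polynomial functions of these, the image is the graph of a polynomial map, and $\mathrm{Sq}$ is an isomorphism onto a smooth (vector-space) image. If instead $a_0$ is free, then $c_0=a_0^2$, recovering $a_0$ requires an honest square root, and $\mathrm{Sq}$ is generically $2:1$ with deck transformation $A\mapsto -A$; quotienting by this reflection gives a bijective birational morphism from the normal quotient onto the (singular) image, i.e.\ the normalization map.

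The three parts then follow by identifying the sign-fixing mechanism dictated by the \emph{position} of the block. For a block at the beginning, $\rho_{(0,N)}=1$ forces $a_0=1$, so $A$ is monic and the sign is pinned; $\gamma^{(1)}$ is injective, hence the normalization map, and it is an isomorphism precisely when the single squaring relation is uncoupled --- this is the case for a lone even part and, by direct check, for the mild very-even coupling $[(2s)^{2l},(2r)^2]$, whereas for a longer string the adjacent even-value segments share endpoint coefficients, the squaring relations become coupled, and the image acquires singularities. For a block at the end, the $\lambda^0$-coefficient of $\charpol(\phi)$ in \eqref{charpolD} is $u_N^2=\mathrm{Pf}^2$, so the trailing coefficient of $A$ is pinned to the Pfaffian; this selects one of $\pm A$, making $\gamma^{(i)}$ injective (the normalization map), while $S_{BK}^{(i)}$ is singular because the leading coefficient is still recovered only through a square root. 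For a block strictly in the middle, neither end is pinned: Conjecture \ref{Simpleconjecture} supplies a \emph{polynomial} square root $a_{k_i}$ of $c_{2k_i}$, which CD retains as a generator, but its sign is genuinely ambiguous, so $\gamma^{(i)}$ is generically $2:1$ with deck transformation the reflection $r$, and $S_{CD}^{(i)}/r\to S_{BK}^{(i)}$ is birational.

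The main obstacle is part (iii): it rests on Conjecture \ref{Simpleconjecture} to guarantee that the middle leading coefficient $a_{k_i}$ is single-valued (polynomial) rather than merely algebraic, so the result is necessarily conditional; by contrast parts (i) and (ii) are unconditional because the monic condition (Prop \ref{mult4}(a)) and the Pfaffian (Prop \ref{mult4}(b)) already force the relevant roots to be polynomial. The remaining technical work is combinatorial and lives in part (i): verifying that for a string of two or more distinct even values the coupling between segments genuinely produces a singular, non-normal image (so that $\gamma^{(1)}$ is the normalization map and not an isomorphism), and checking by hand that the very-even configuration $[(2s)^{2l},(2r)^2]$ is the sole exception among strings beginning the partition. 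Establishing that the bijective birational morphisms above really are \emph{normalization} maps uses that their sources (affine spaces, or affine spaces modulo $\pm 1$) are normal, together with Zariski's main theorem.
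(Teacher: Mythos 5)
Your overall strategy is the same as the paper's: decompose block-by-block, observe that $\gamma^{(i)}$ is the coefficient-level squaring map of \eqref{asquare}, and let the position of the block (beginning, end, middle) decide whether the leading coefficient is pinned to $1$, the trailing one to the Pfaffian, or neither. However, there are two genuine problems. First, your claim that parts (i) and (ii) are unconditional ``because the monic condition (Prop \ref{mult4}(a)) and the Pfaffian (Prop \ref{mult4}(b)) already force the relevant roots to be polynomial'' is circular: the paper \emph{derives} Proposition \ref{mult4}(a),(b) from the proofs of Theorem \ref{th:BKCD}, so they cannot be used as inputs. Second, your key lemma (monic $\Rightarrow$ the $a$'s are recovered polynomially and recursively) only covers the \emph{first} distinct even value in a block. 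For a block $[(2s)^{2l},(2r)^{2m},\dots]$ with $m\geq 2$, the squaring relation for the second segment has leading coefficient $a_{2ls}^2$, not $1$, so the recursion produces the subsequent $a$'s only as rational functions with powers of $a_{2ls}$ (respectively $\tilde{c}$ in part (ii)) in the denominator. The technical core of the paper's proof is the back-substitution argument showing these are nonetheless polynomial on the Lie algebra: one starts from $a_{2ls+2mr}^2=c_{4ls+4mr}$ (a polynomial), concludes $a_{2ls+2mr}$ is polynomial because a rational function with polynomial square is polynomial, and then works backwards through the relations to deduce the same for $a_{2ls+2(m-1)r},\dots,a_{2ls+2r}$. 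You defer the ``remaining technical work'' to verifying singularity of the image and the very-even exception, but you have misidentified where the real difficulty lies.

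A smaller but still genuine error: in parts (i) and (ii) you assert that $\gamma^{(i)}$ is \emph{injective} and conclude ``hence the normalization map.'' It is not injective. In the paper's example $[5,3,2^2]$ the map is $2{:}1$ on the locus $\{\tilde{c}=0,\ c_8\neq 0\}$, and in the multi-segment beginning case the fibers over the locus $\{a_{2ls}=0\}$ are cut out by coupled quadratics and are only finite. The correct statement, and the one the paper proves, is that $\gamma^{(i)}$ is $1{:}1$ away from a codimension-one locus and finite-to-one on it; the normalization conclusion then follows from \emph{birational $+$ finite $+$ normal source}, not from injectivity. Your treatment of part (iii), including the role of Conjecture \ref{Simpleconjecture} and the quotient by the reflection $r$, matches the paper and is fine.
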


Before turning to the proof of \textbf {Theorem \ref{th:BKCD}}, let us look at a few examples.

\begin{enumerate}
\item Consider the partition $[4^2,1^4]$. The string of even parts is at the beginning of the partition. Consequently, $Q_1(u)$ has a leading term with coefficient 1.
\begin{displaymath}
u^2 + c_4 u +c_8 = {(a_0 u+ a_4)}^2
\end{displaymath}
This can be solved by setting $a_0=1$, $a_4 = c_4/2$. Then we have a \emph{constraint}, $c_8 = \tfrac{1}{4} c_4^2$. Eliminating $c_8$, CD and BK agree: $S_{\text{CD}}=S_{\text{BK}}$

\item Consider the partition $[2^6]$. This is a very-even partition, so it consists of a single block of even parts. Since it starts with an even part,  $Q_1(u)$ has a leading term with coefficient 1.

\begin{displaymath}
u^6+c_2 u^5+c_4 u^4+c_6 u^3 +c_8 u^2+c_{10} u + \tilde{c}^2 = {(a_0 u^3+a_2 u^2+a_4 u +a_6)}^2
\end{displaymath}
We set $a_0=1$ and solve for $a_2=c_2/2$. The subsequent terms yield

\begin{displaymath}
\begin{aligned}
a_2^2 +2a_0 a_4 &= c_4\\
2(a_0 a_6 +a_2 a_4) &= c_6\\
2 a_2 a_6 + a_4^2 &= c_8\\
a_6&=\pm \tilde{c}
\end{aligned}
\end{displaymath}
which can be successively solved for $a_{4,6}$:

\begin{displaymath}
\begin{aligned}
a_4&=\tfrac{1}{2}\left(c_4-\tfrac{1}{4}c_2^2\right)\\
a_6&= \tfrac{1}{2}c_6-\tfrac{1}{4} c_2(c_4-\tfrac{1}{4}c_2^2)
\end{aligned}
\end{displaymath}
Plugging back in yields the constraints

\begin{equation}
\begin{aligned}
   c_6\mp2\tilde{c}&=\tfrac{1}{2}c_2(c_4-\tfrac{1}{4} c_2^2)\\
   c_8 &=\tfrac{1}{4}{\left(c_4-\tfrac{1}{4}c_2^2\right)}^2\pm c_2\tilde{c}\\
c_{10}&=\pm\tilde{c}(c_4-\tfrac{1}{4}c_2^2)
\end{aligned}
\label{222222constraints}\end{equation}
The two solutions for $a_6$, which yield the two choices of sign in \eqref{222222constraints}, correspond to the two nilpotent orbits associated to this very-even partition. Again, CD and BK agree.

\item Consider the partition $[4^2,2^2]$. Again, the partition is very-even, so there is a single block of even parts. However, there are two distinct even parts, so we get two perfect squares. Again, the leading part is even, so $Q_1(u)$ has leading coefficient 1.

\begin{displaymath}
\begin{aligned}
   u^2+c_4 u +c_8 &= (a_0 u+ a_4)^2\\
   c_8 u^2 +c_{10} u + \tilde{c}^2 &= (a_4 u +a_6)^2
  \end{aligned}
\end{displaymath}
Here, we set $a_0=1$ and $a_4=c_4/2$, which yields the constraint $c_8 =\tfrac{1}{4} c_4^2$. Plugging that into the second equation, we have $a_4=c_4/2$ and $a_6= \pm \tilde{c}$. This yields the constraint $c_{10}=\pm \tilde{c}c_4$. We thus have two solutions, corresponding to the two nilpotent orbits determined by this very-even partition.

\item Consider the very-even partition $[4^2,2^4]$. We again get two perfect squares
\[
\begin{split}
u^2+c_4 u+c_8&= (u+a_4)^2\\
c_8 u^4+c_{10}u^3+c_{12}u^2+c_{14}u+\tilde{c}^2&=(a_4 u^2+a_6 u +a_8)^2
\end{split}
\]
Both BK and CD can solve for $a_4=\tfrac{1}{2}c_4$ and thence for $c_8$
\[
c_8=\tfrac{1}{4} c_4^2
\]
From the second equation, we have two possible solutions for $a_8$, $a_8=\pm\tilde{c}$, corresponding to the two choice of nilpotent orbit determined by this very-even partition. The remaining constraint equations read
\begin{equation}\label{part442222}
\begin{split}
c_{10}&= c_4 a_6\\
c_{12}&= a_6^2\pm c_4\tilde{c}\\
c_{14}&=\pm 2\tilde{c}a_6
\end{split}
\end{equation}
CD take the local Hitchin base to be the vector space spanned\footnote{$c_2,c_6$ don't participate in the constraints, so CD and BK agree that they form a vector subspace of the local Hitchin base.} by $c_2,c_4,c_6,a_6,\tilde{c}$. BK eliminate $a_6$ from \eqref{part442222} and write the local Hitchin base as the singular affine variety
\[
S_{\text{BK}}=\left\{\begin{aligned}
c_{14}^2 - 4  c_{12} \tilde{c}^2 \pm 4 c_{4} \tilde{c}^3&=0\\
c_{10} c_{14} \mp 2 c_{4}  c_{12} \tilde{c} + 2 c_{4}^2 \tilde{c}^2&=0\\
c_{10}^2 - c_{4}^2 c_{12} \pm  c_{4}^3 \tilde{c}&=0\\
c_{4} c_{14} \mp 2 c_{10} \tilde{c}&=0
\end{aligned}\right\}\subset\mathbb{C}^7
\]
Away from the singular locus
\[
\{c_4=c_{10}=c_{14}^2-4c_{12}\tilde{c}^2=0\}
\]
the map $\gamma: S_{\text{CD}}\to S_{\text{BK}}$ is 1-1. On the singular locus, it is generically 2-1 (except on the line $\{c_4=c_{10}=c_{12}=c_{14}=0\}$ where it is 1-1), hence a normalization.

\item Consider the partition $[5,3,2^2]$. This partition has the block of even parts at the end. Hence CD and BK disagree. We have 

\begin{displaymath}
c_8 u^2+ c_{10} u + \tilde{c}^2 = (a_4 u +a_6)^2
\end{displaymath}
We can solve for $a_6 =\tilde{c}$. But the remaining constraints then read

\begin{equation}
c_8= a_4^2 ,\qquad c_{10} = 2a_4 \tilde{c}
\label{CDBKsol}\end{equation}
CD say that $a_4$ and $\tilde{c}$ parametrize the Hitchin image (instead of $c_8, c_{10}$ and $\tilde{c}$). BK eliminate $a_4$ and say this factor in the Hitchin image is parametrized by $c_8, c_{10}$ and $\tilde{c}$, with the constraint $c_{10}^2=4c_8\tilde{c}^2$.

The two spaces, $\mathcal{S}_{\text{CD}}$ and $\mathcal{S}_{\text{BK}}$ are birational. The map $\mathcal{S}_{\text{CD}}\to \mathcal{S}_{\text{BK}}$ is 1:1 away from the locus $\{\tilde{c}=0,\; c_8\neq 0\}$, where it is 2:1. Again, $\mathcal{S}_{\text{CD}}$ is the normalization of $\mathcal{S}_{\text{BK}}$ 

\item Consider the partition $[5^2,4^2,2^4,1^2]$. We again get two perfect squares
\[
\begin{split}
c_{10} u^2 + c_{14} u + c_{18} &= (a_5 u+ a_9)^2\\
c_{18} u^4 + c_{20} u^3+c_{22}u^2+c_{24}u+c_{26}&=(a_9 u^2+a_{11} u +a_{13})^2
\end{split}
\]
In this case equating coefficients in the polynomial equations yields:
\[
\begin{aligned}
    c_{10} &= a_5^2 \\
    c_{14} &= 2 a_5 a_9 \\
    c_{18} &= a_9^2 \\
    c_{20} &= 2 a_9 a_{11} \\
    c_{22} &= a_{11}^2 + 2 a_9 a_{13} \\
    c_{24} &= 2 a_{11} a_{13} \\
    c_{26} &= a_{13}^2 
\end{aligned}
\]
Here BK would eliminate $a_5$, $a_9$, $a_{11}$ and $a_{13} $ and write this factor in the local Hitchin base, $S^{(i)}$, as the singular affine variety which is the intersection of a quadric and seven cubics in $\mathbb{C}^9$ (spanned by $c_{10},c_{12},\dots, c_{26}$).
\begin{equation}\label{CarlosExample}
 S=\left\{
\begin{aligned}
c_{14}^2 - 4 c_{10} c_{18}&=0\\ 
 c_{20}^3 - 4c_{18} c_{20} c_{22} + 8c_{18}^2 c_{24}&=0\\
 c_{20}^2 c_{22} - 4 c_{18} c_{22}^2 + 2 c_{18} c_{20} c_{24} + 16 c_{18}^2 c_{26}&=0\\
 c_{20}^2 c_{24} - 4 c_{18} c_{22} c_{24} + 8 c_{18} c_{20} c_{26}&=0\\
 c_{18} c_{24}^2 - c_{20}^2 c_{26}&=0\\
 c_{20} c_{24}^2 - 4 c_{20} c_{22} c_{26} + 8 c_{18} c_{24} c_{26}&=0\\
 c_{22} c_{24}^2 - 4 c_{22}^2 c_{26} + 2 c_{20} c_{24} c_{26} + 16 c_{18} c_{26}^2&=0\\
 c_{24}^3 - 4 c_{22} c_{24} c_{26} + 8 c_{20} c_{26}^2&=0
 \end{aligned}
\right\}\subset\mathbb{C}^9   
\end{equation}
while CD would take $S^{(i)}$ to be the vector space $\mathbb{C}^6$ spanned by $a_5,a_{9},a_{11},a_{13},c_{12},c_{16} $. The map $\gamma^{(i)}:S_{\text{CD}}^{(i)}\to S_{\text{BK}}^{(i)}$, is 2-1 away from the singular locus
\[
\{c_{14}=c_{18}=c_{20}=c_{24}^2-4c_{22}c_{26}=0\}
\]
 of \eqref{CarlosExample}, where it is \hbox{finite-1}. If $r:\{a_5,a_{9},a_{11},a_{13}\}\mapsto \{-a_5,-a_{9},-a_{11},-a_{13}\}$, then $S^{(i)}_{\text{CD}}/r$ is birational to $S^{(i)}_{\text{BK}}$.

\item Finally, let us consider a partition with three distinct blocks of even parts, one at the beginning, one in the middle and one at the end:  $[6^2,5^2,4^4,3^2,2^2]$ in $D_{24}$. We have

\begin{displaymath}
\begin{split}
  u^2+ c_6 u+ c_{12} &= (a_0 u+a_6)^2\\
  c_{22}u^4+c_{26}u^3+c_{30}u^2+c_{34}u+c_{38} &= (a_{11} u^2+a_{15} u +a_{19})^2\\
  c_{44}u^2+c_{46}u+ \tilde{c}^2 &= (a_{22} u +a_{24})^2
 \end{split}
\end{displaymath}
The sign ambiguity among the $a$'s in the first equation can be resolved by setting $a_0=1$. Then we have $2a_6=c_6$, $a_6^2=c_{12}$. Or

\begin{displaymath}
c_{12} = \tfrac{1}{4} c_6^2.
\end{displaymath}
On this, CD and BK agree.

Similarly, among $a_{22},a_{24}$, the sign ambiguity can be fixed by setting $a_{24}=\tilde{c}$. Then we have $2a_{22}\tilde{c}=c_{46}$, $a_{22}^2=c_{44}$.

BK would eliminate $a_{22}$ from these equations and write the singular hypersurface in $\mathbb{C}^3$ (spanned by $c_{44},c_{46},\tilde{c}$)

\begin{displaymath}
4 c_{44} \tilde{c}^2 = c_{46}^2
\end{displaymath}
CD would construct the vector space, $\mathbb{C}^2$, spanned by $a_{22}$ and $\tilde{c}$. The map to BK's space is given by

\begin{displaymath}
\begin{split}
    c_{44}&=a_{22}^2\\
    c_{46}&=2a_{22}\tilde{c}\\
    \tilde{c}&=\tilde{c}
 \end{split}
\end{displaymath}
This map is 1-1 away from the locus $\{\tilde{c}=0,\; c_{44}\neq0\}$. On that locus, it is 2-1.

 Finally, let us turn to the block of even parts in the middle. Both CD and BK agree on the vector space factor, spanned by $c_{24},c_{28},c_{32},c_{36}$. The constraints on the remaining $c$'s yield
\begin{displaymath}
\begin{split}
c_{22}&= a_{11}^2\\
c_{26}&= 2a_{11} a_{15}\\
c_{30}&= 2a_{11} a_{19}+a_{15}^2\\
c_{34}&=2a_{15}a_{19}\\
c_{38}&= a_{19}^2
\end{split}
\end{displaymath}
BK eliminate $a_{11},a_{15},a_{19}$ and write $S\subset\mathbb{C}^5$ as the singular intersection of 7 cubics:

\begin{displaymath}
S=\left\{
\begin{aligned}
 c_{22}^3 - 4c_{26} c_{22} c_{30} + 8c_{26}^2 c_{34}&=0\\
 c_{22}^2 c_{30} - 4 c_{26} c_{30}^2 + 2 c_{26} c_{22} c_{34} + 16 c_{26}^2 c_{38}&=0\\
 c_{22}^2 c_{34} - 4 c_{26} c_{30} c_{34} + 8 c_{26} c_{22} c_{38}&=0\\
 c_{26} c_{34}^2 - c_{22}^2 c_{38}&=0\\
 c_{22} c_{34}^2 - 4 c_{22} c_{30} c_{38} + 8 c_{26} c_{34} c_{38}&=0\\
 c_{30} c_{34}^2 - 4 c_{30}^2 c_{38} + 2 c_{22} c_{34} c_{38} + 16 c_{26} c_{38}^2&=0\\
 c_{34}^3 - 4 c_{30} c_{34} c_{38} + 8 c_{22} c_{38}^2&=0
\end{aligned}
\right\}\subset\mathbb{C}^5
\end{displaymath}
CD take $S$ to be the vector space $\mathbb{C}^3$, spanned by $a_{11},a_{15},a_{19}$.
\end{enumerate}

\begin{proof}[Proof of Theorem 2]

For part (i) of the Theorem, consider the case where the string of consecutive even parts occurs at the beginning of the partition $[p]$. Say the first part (with multiplicity) is $(2s)^{2l}$. The associated constraint equation reads
\[
u^{2l}+c_{2s}u^{2l-1}+c_{4s}u^{2l-2}+\dots+c_{4ls}=(u^l+a_{2s}u^{l-1}+a_{4s}u^{l-2}+\dots+a_{2ls})^2
\]
The equation for $a_{2s}$ is linear: $a_{2s}=\tfrac{1}{2}c_{2s}$. Plugging its solution into the equation for $a_{4s}$, then becomes linear and can be solved. Proceeding in the same fashion, we  solve for $a_{2s},\dots, a_{2ls}$ which, in turn, allows us to solve for $c_{2s(l+1)},\dots c_{4ls}$ as polynomials in the $c_{2s},\dots, c_{2ls}$.
If this is the only even part in the block, then we are done. If this is a very-even partition of the form $[(2s)^{2l},(2r)^2]$, then the second constraint equation
\[
a_{2ls}^2 u^2 +c_{4ls+2r} u +\tilde{c}^2 = (a_{2ls}u+a_{2ls+2r})^2
\]
which is solved by setting $a_{2ls+2r}=\pm \tilde{c}$ and $c_{4ls+2r}= \pm 2\tilde{c} a_{2ls}$ (recall that we have already solved for $a_{2ls}$). The choice of sign corresponds to the two nilpotent orbits associated to this very-even partition. 

In either of these two cases, we have solved for the $a$'s in terms of the $c$'s and $S^{(1)}_{\text{CD}}\simeq S^{(1)}_{\text{BK}}$. In any other case of a string of even parts at the beginning of the partition, we can continue to solve for the $a$'s, as we have done. However, we only obtain them as rational functions of the $c$'s. For concreteness, suppose $[p]=[(2s)^{2l},(2r)^{2m},\dots]$. We have already solved for $a_{2ls}$ as a polynomial in the $c$'s. The constraint associated to the second distinct even part in this block is
\[
a_{2ls}^2 u^{2m} +c_{4ls+2r} u^{2m-1}+\dots+c_{4ls+4mr}=(a_{2ls} u^m+a_{2ls+2r}u^{m-1}+\dots +a_{2ls+2mr})^2
\]
This is a linear equation for $a_{2ls+2r}$, which is solved by setting $a_{2ls+2r}=\tfrac{c_{4ls+2r}}{2a_{2ls}}$. Plugging this in, the equation for $a_{2ls+4r}$ becomes linear,\dots In the end, we obtain expressions for $a_{2ls+4r},\dots, a_{2ls+2mr}$ which are rational functions of the $c$'s with some powers of $a_{2ls}$ as the denominator. In particular, we have that $a_{2ls+2mr}^2=c_{4ls+4mr}$ which is a polynomial on the Lie algebra. Since $a_{2ls+2mr}$ is a rational function on the Lie algebra and its square is a polynomial, we conclude that $a_{2ls+2mr}$ is a polynomial on the Lie algebra. For $m=1$, we are done. For $m=2$, the remaining constraint equation reads
\[
c_{4ls+4r}=2a_{2ls}a_{2ls+4r} + a_{2ls+2r}^2
\]
Since the other terms are polynomial, we conclude that $a_{2ls+2r}^2$ is polynomial and hence, since $a_{2ls+2r}$ is rational, that $a_{2ls+2r}$ is polynomial. Things become progressively more complicated as we go to higher $m$. For $m=3$, we have that $a_{2ls}$ and $a_{2ls+6r}$ are polynomials on the Lie algebra. For $a_{2ls}\neq 0$, we can solve for the rest of the $a$'s:
\begin{subequations}
\begin{align}
a_{2ls+2r}&= \frac{1}{2a_{2ls}}c_{4ls+2r}\label{m3a}\\
a_{2ls+4r}&=\frac{1}{8a_{2ls}^3}(4c_{4ls+4r}a_{2ls}^2-c_{4ls+2r}^2)\label{m3b}\\
a_{2ls+6r}&=\frac{1}{32a_{2ls}^5}(16c_{4ls+6r}a_{2ls}^4-4c_{4ls+2r}c_{4ls+4r}a_{2ls}^2+c_{4ls+2r}^3)\label{m3c}
\end{align}
\end{subequations}
Since $a_{2ls+6r}$ is a polynomial, \eqref{m3c} can be rewritten as
\[
\begin{split}
32a_{2ls}^5 a_{2ls+6r}&= 16c_{4ls+6r}a_{2ls}^4-4c_{4ls+2r}c_{4ls+4r}a_{2ls}^2+c_{4ls+2r}^3\\
&=16c_{4ls+6r}a_{2ls}^4 -8 a_{2ls}^3 a_{2ls+4r}
\end{split}
\]
or
\[
a_{2ls+4r} = 2 c_{4ls+6r}a_{2ls} -4 a_{2ls}^2 a_{2ls+6r}
\]
which proves that $a_{2ls+4r}$ is a polynomial. Plugging \eqref{m3a} into \eqref{m3b}, we have
\[
a_{2ls+2r}^2 = c_{4ls+4r}-2a_{2ls}a_{2ls+4r}
\]
We just proved that the RHS is a polynomial. Since  $a_{2ls+2r}$ is rational, and its square is a polynomial, it too is a polynomial.

The procedure generalizes to higher $m$, though it becomes increasingly tedious to implement. We first solve for the $a$'s as rational functions of the $c$'s and $a_{2ls}$ with a power of $a_{2ls}$ as the denominator. Noting that $a_{2ls+2mr}$ is a polynomial, we rearrange the expression for it to solve for $a_{2ls+2(m-1)r}$, which we see is a polynomial. Plugging that knowledge into the expression for $a_{2ls+2(m-1)r}$, we learn that $a_{2ls+2(m-2)r}$ is a polynomial, etc. Proceeding in this fashion, we conclude that $a_{2ls+2(m-2)r},\dots, a_{2ls+2r}$ are also polynomials on the Lie algebra.

Away from some codimension-1 locus where $a_{2ls}=0$, the map from the space of $a$'s to the space of $c$'s is invertible. On the locus $a_{2ls}=0$, we have a set of coupled quadratic equations whose number of solutions is finite. If there are more even parts in this block, we continue as before. The $a$'s in the next block are rational functions of the $c$'s and $a_{2ls+2mr}$, with denominators that are powers of $a_{2ls+2mr}$. 

We conclude that, away from a locus of codimension-1, the map $\gamma^{(1)}: S^{(1)}_{\text{CD}}\to S^{(1)}_{\text{BK}}$ is 1-1. On that  locus, it is finite-to-1. Over all of $S^{(1)}_{\text{BK}}$, $\gamma^{(1)}$ is birational and finite -- hence a normalization.

More or less the same story holds for part (ii): a string of even parts at the end of the partition. Consider a partition of the form $[p]=[\dots, (2s)^{2l}]$. The constraint associated to this last even part is
\[
c_{2N-4sl}u^{2l}+c_{2N-2s(2l-1)} u^{2l-1}+\dots+c_{2N-2s}u +\tilde{c}^2 =(a_{N-2sl}u^l+a_{N-2s(l-1)}u^{l-1}+\dots+a_{N-2s}u+a_N)^2
\]
We can take $a_N=\tilde{c}$. Then we find $a_{N-2s}= \tfrac{c_{2N-2s}}{2\tilde{c}}$, and so on. The $a$'s are rational functions of the $c$'s, with denominators which are powers of $\tilde{c}$. The last equation then says that $c_{2N-4sl}= a_{N-2sl}^2$. Since $c_{2N-4sl}$ is a polynomial on the Lie algebra, $a_{N-2sl}$ must be denominator-free (as a rational function on the Lie algebra). Hence it (and the other $a$'s that we solved for) are also polynomials on the Lie algebra. The same considerations as before lead us to conclude that, for this block, $\gamma^{(i)}$ is birational and finite -- hence a normalization.

Finally, when the string of even parts occurs in the middle, the map $\{a_j\}\to S^{(i)}_{\text{BK}}$ is generically 2-1, since if $\{a_j\}$ is a solution, then so is $\{-a_j\}$. A-priori, now the $a_j$ are only algebraic; for instance, the first one $a_{k} = (c_{2k})^{1/2}$. Here, we need to invoke Conjecture \ref{Simpleconjecture} which says that $a_{k}$ is, in fact, a polynomial. We then proceed as before. In the same spirit as the previous cases, the map from the quotient $S^{(i)}_{\text{CD}}/r \to S^{(i)}_{\text{BK}}$ is evidently birational and finite.
\end{proof}

We want to emphasize that the above theorem is purely about the local Hitchin image. We explore some of the subtleties that arise in studying the corresponding questions in the global setting in \ref{globalconsiderations-even}.


\section{Odd Type Constraints}
\label{sec:oddtype}

The constraints of Section \ref{sec:even_and_very_even} had to do with strings of consecutive even parts in $[p]$. We now turn to the constraints associated to consecutive odd parts.




\begin{defn}
(Odd-type constraint) Let $2k=\sum_{i=1}^{2j} p_i$, for some $j$. We say that $c_{2k}$, for $k \lt N$, has an \emph{odd-type constraint} iff $p_{2j}$ and $p_{2j+1}$ are  odd \emph{and} $c_{2k}$ can be written as a perfect square of a polynomial function,  $a_k$, on $\mathfrak{g}$.
\label{odd-type-constraint}
\end{defn}

\begin{defn}(Marked pair) A \textbf{marked pair} is a pair of successive parts, $p_{2j},p_{2j+1}$, where $p_{2j}$ and $p_{2j+1}$ are both odd and $p_{2j} > p_{2j+1}$.
\label{marked-pair-definition}
\end{defn}

It follows from Conjecture \ref{Simpleconjecture} that for every marked pair in the special Hitchin orbit $\mathcal{O}_H$, there is an odd type constraint $c_{2k} = (a_{k})^2$. When such a constraint occurs, we have a choice in the (local) ring of invariant polynomials, we can either continue to use $c_{2k}$ as the generator for the local ring of invariants or replace $c_{2k}$ by $a_k$. 

This situation is somewhat similar to what happens with $c_{2N}$ for any Hitchin orbit with one key difference.  $c_{2N}$ is always a perfect square, whose  square root is the Pfaffian $\tilde{c}_N$. The transformation that flips the sign of $\tilde{c}_N$ is an element of $O(2N)$ but not an element of $SO(2N)$. So, when we are studying the $SO(2N)$ Hitchin system, we have no choice but to choose $\tilde{c}_N$ as the generator for the ring of invariants.

In the case of odd type constraints, we have $c_{2k}=a_k^2$ for some polynomial function $a_k$ and $k \lt N$. The subgroup of $SO(2N)$ which preserves our chosen nilpotent is disconnected. The 
gauge transformations in that subgroup which flip the sign of $a_k$ are not in the identity component. So, one has an \textbf{additional choice} (beyond deciding the group for the Hitchin system) of whether we choose to quotient, locally, by such transformations. Quotienting by such transformations corresponds to choosing $c_{2k}$ as the local invariant while not quotienting corresponds to choosing $a_k$ as the local invariant.

Depending on the choice we make, the corresponding local Hitchin image, which is always a graded vector space, will either have a contribution in degree $2k$ or degree $k$. More generally, when $l$ odd-type constraints occur (each for a different $c_{2k}$) for given Hitchin nilpotent $\mathcal{O}_H$, we have $2^l$ choices for the local Hitchin image. We say that the Hitchin image is spanned by the $a_k$'s when we chose $a_k$s as the local invariants and similarly for the $c_k$s.

\subsection{Sommers-Achar subgroups}


\begin{defn}(the group \textbf{$\overline{A}_b$})
We define a finite group \textbf{$\overline{A}_b(\mathcal{O}_H)$} associated to every special Hitchin orbit $\mathcal{O}_H$. $\overline{A}_b=(\mathbb{Z}_2)^d$, where $d$ is the number of marked pairs in $[\mathcal{O}_H]$. If we label the marked pair $[\dots,(2r+1)_{2j},(2s+1)_{2j+1},\dots]$ by the positive integer $j$ (the subscript denotes the position within the partition), then  the generators of $\overline{A}_b$ are labeled by the ordered set 
\begin{equation}\label{Sdef}
S=\{j_1,j_2,\dots,j_d\}
\end{equation}
with $j_1<j_2<\dots<j_d$.
\label{abarb-definition}
\end{defn} 

$\overline{A}_b(\mathcal{O})$ is a subgroup of Lusztig's quotient, $\overline{A}(\mathcal{O})$ which, in turn, is a subgroup of the equivariant fundamental group $A(\mathcal{O})$ (=component group of the centralizer) of the orbit $\mathcal{O}$. For the reader's convenience, we recall the definitions of these other groups in Appendix \ref{App:Lusztig}. The subscript `b' in $\overline{A}_b$ is meant to signify the fact that it is this subgroup of $\overline{A}$ that is visible in the geometry of the Hitchin base.  If Conjecture \ref{Simpleconjecture} holds, then we can take $B_0$ to be the Hitchin image  parametrized by the $a_{k_i}$. The $i^{\text{th}}$ generator of $\overline{A}_b(\mathcal{O})$ acts on $B_0$ as $a_{k_i}\to -a_{k_i}$. In particular, the Hitchin image parametrized by the $c_{2k}$s is $B_0/\overline{A}_b$.

However, there are intermediate quotients that are of physical interest.

\begin{defn}
A \textit{Sommers-Achar} subgroup $C_i \subset \overline{A}_b(\mathcal{O}_H)$ is a subgroup whose generators are labeled by a subset $S_i \subset S$. Corresponding to the pair $(\mathcal{O}_H,C_i)$, we define the Hitchin image to be $B_0 / C_i$. \label{def_Hitchin_base}
\end{defn}

The above definition is a sharper version of the one given in \cite{Chacaltana:2012zy}. The terminology is motivated by the duality map due to Sommers \cite{MR1850659} and Achar-Sommers \cite{MR1927953} which, in turn, is a refinement of the duality maps of Lusztig-Spaltenstein \cite{spaltenstein2006classes} and Barbasch-Vogan \cite{barbasch1985unipotent}. We should note here that the subgroups $C_i$ arise in \cite{MR1850659},\cite{MR1927953} in a slightly different context. Their relevance to the geometry of the Hitchin image is one of the crucial insights from \cite{Chacaltana:2012zy}. 

\subsection {Special pieces}

Recall that if $[\mathcal{O}]$ is the partition label corresponding to a nilpotent orbit (not necessarily special), then duality acts by $d : \mathcal{O} \rightarrow \mathcal{O}^\vee$ where $[\mathcal{O}^\vee] := D([\mathcal{O}]^T)$. Here, $[\ldots]^T$ denotes the transpose operation on partitions and $D(\ldots)$ is the D-collapse operation. For any $\mathcal{O}$, $\mathcal{O}^\vee$ is always a special nilpotent orbit, so $d$ is neither one--one nor onto. Nevertheless, $d^3=d$.

Here is an example of the duality operation. Take $[\mathcal{O}]=[3^2,2^2,1^2]$. We have $[\mathcal{O}]^T=[6,4,2]$ and $D([\mathcal{O}]^T) = [5^2,1^2]$. We depict the steps using Young diagrams below (where the column sizes encode the parts of a partition) :

\begin{center}
\begin{tikzpicture}
\tikzset{pics/yb/.style n args={2}{
code= {\draw ({.25*#1},{-.25*#2}) rectangle ++(.25,.25);}}}
\path
pic {yb={0}{0}}
pic {yb={0}{1}}
pic {yb={0}{2}}
pic {yb={1}{0}}
pic {yb={1}{1}}
pic {yb={1}{2}}
pic {yb={2}{0}}
pic {yb={2}{1}}
pic {yb={3}{0}}
pic {yb={3}{1}}
pic {yb={4}{0}}
pic {yb={5}{0}}
(1.75,-.25) edge[->] node[above] {$\footnotesize\text{transpose}$} (4,-.25)
(4.25,0)
pic {yb={0}{0}}
pic {yb={0}{1}}
pic {yb={0}{2}}
pic {yb={0}{3}}
pic {yb={0}{4}}
pic[red] {yb={0}{5}}
pic {yb={1}{0}}
pic {yb={1}{1}}
pic {yb={1}{2}}
pic {yb={1}{3}}
pic {yb={2}{0}}
pic[red] {yb={2}{1}}
(5.75,-.25) edge[->] node[above] {$\footnotesize\text{D-collapse}$} (8,-.25)
(8.25,0)
pic {yb={0}{0}}
pic {yb={0}{1}}
pic {yb={0}{2}}
pic {yb={0}{3}}
pic {yb={0}{4}}
pic {yb={1}{0}}
pic {yb={1}{1}}
pic {yb={1}{2}}
pic {yb={1}{3}}
pic[red] {yb={1}{4}}
pic {yb={2}{0}}
pic[red] {yb={3}{0}}
;
\end{tikzpicture}
\end{center}
\textbf{Note:} We will use $D(..)$ to denote both the operation of D-collapse itself and the resulting special partition. It should be clear from the context which we mean.

For our given special nilpotent, $O_H$, we can ask about its pre-image under $d$.

\begin{defn}(Special piece)
    A special piece consists of those nilpotent orbits in the pre-image of the duality map $d$ which map to the same special nilpotent orbit $O_H$ under $d$. Further, $O_{N_0}\coloneqq d(O_H)$ is the unique special nilpotent in the special piece.
    \label{specialpiece-defn}
\end{defn}

In type $D$, it is known (see Theorem \ref{KPrestated} below) that the size of special pieces are $2^s$ for some natural number $s$. Furthermore,

\begin{defn}
(Non-special degeneration) We define a \emph{non-special degeneration} to be a Kraft-Procesi small degeneration (see Appendix \ref{kpappendix}) that starts with a Nahm orbit $\mathcal{O}$ and results in a non-special orbit $\mathcal{O}'$ which is in the same special piece as $\mathcal{O}$. On the level of the corresponding partitions, a non-special degeneration is
\begin{equation}
\label{eq:non-specialdegen}
\begin{split}
[\dots,(2r+1)_{2j},(2r)_{2j+1},&\dots (2r)_{2j+2m} ,(2r-1)_{2j+2m+1},\dots] \\&\to
     [\dots,(2r)_{2j},(2r)_{2j},\dots,(2r)_{2j+2m},(2r)_{2j+2m+1},\dots].
     \end{split}
\end{equation}
We label it by the positive integer $j$ (again, subscripts denote the location within the partition).
$m=0$ corresponds to a degeneration of type (a); $m>0$ is a degeneration of type (g).
\label{non-special-deformation}
\end{defn}

Starting with the special orbit $O_{N_0}=d(O_H)$, there are $s$ independent non-special degenerations\footnote{see Appendix \ref{kpappendix} for more details on independent non-special degenerations.} that we can do. Let us label them by the set
\begin{equation}\label{Sprimedef}
S'=\{j_1,j_2,\dots,j_s\}
\end{equation}
with $j_1<j_2<\dots<j_s$.
Every orbit in the special piece is obtained by starting with the special orbit $O_{N_0}$ and performing some set of such non-special degenerations. That is to say that the orbits $O_{N_i}$ in the special piece are labeled by subsets $S'_i\subset S'$.

\subsection{Special pieces and Hitchin images}

Associated to each special Hitchin nilpotent, $O_H$, we have defined two ordered sets: the set $S$ of marked pairs and the set $S'$ of non-special degenerations in the dual special piece.

\begin{theorem}
Given a special Hitchin orbit $\mathcal{O}_H$ and the associated ordered sets 
\begin{itemize}
\item $S$ of marked pairs in $[O_H]$ and
\item $S'$ of independent non-special degenerations of $\mathcal{O}_{N_0}=d(\mathcal{O}_H)$,
\end{itemize}
we have:
\begin{enumerate}%
\item[a)] The ordered sets $S$ and $S'$ have the same number of elements. 
\item[b)] Let $\sigma: S' \rightarrow S$ be the order-reversing bijection between them. $\sigma$ induces a bijection between the set of nilpotent orbits $\mathcal{O}_{N_i}$ in the dual special piece and the set of pairs $(O_H,C_i)$. By Conjecture \ref{Simpleconjecture}, we can identify the latter with the set of Hitchin images $B_0/C_i$.
\end{enumerate}
\label{odd-type-theorem}
\end{theorem}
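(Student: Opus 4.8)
## Proof Proposal for Theorem \ref{odd-type-theorem}

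The plan is to establish both parts through a careful combinatorial analysis of the D-collapse operation and the structure of Kraft-Procesi degenerations, anchored by the explicit description of marked pairs in Definition \ref{marked-pair-definition} and of non-special degenerations in Definition \ref{non-special-deformation}. First I would prove part (a), the counting statement, by tracing what happens to a marked pair in $[\mathcal{O}_H]$ under the duality map $d$. Recall that $d$ is computed as $[\mathcal{O}_H^\vee] = D([\mathcal{O}_H]^T)$. A marked pair consists of successive odd parts $p_{2j} = 2r+1 > p_{2j+1} = 2s+1$; the key observation is that transposing the partition turns these odd parts into columns whose heights force the D-collapse to act nontrivially, creating precisely one site in $[\mathcal{O}_{N_0}]$ where an independent non-special degeneration (in the sense of \eqref{eq:non-specialdegen}) can be performed. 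I would verify that this assignment is exhaustive: every independent non-special degeneration of $\mathcal{O}_{N_0}$ arises from exactly one marked pair of $[\mathcal{O}_H]$, so $|S| = |S'| = s$.

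For part (b), the strategy is to construct the order-reversing bijection $\sigma: S' \to S$ explicitly from the correspondence established in part (a), and then verify it induces the claimed bijection on orbits. Since both $S$ and $S'$ are totally ordered by the position index $j$ (by \eqref{Sdef} and \eqref{Sprimedef}), and since the transpose operation reverses the ordering of parts, I expect the natural pairing of a marked pair with its image degeneration to be order-reversing — this should fall out of the explicit matching in part (a). The main work is then to show that $\sigma$ promotes to a bijection between $\{\mathcal{O}_{N_i}\}$ and $\{(\mathcal{O}_H, C_i)\}$. By the discussion following Definition \ref{non-special-deformation}, each orbit $\mathcal{O}_{N_i}$ in the dual special piece is labeled by a subset $S_i' \subset S'$; by Definition \ref{def_Hitchin_base}, each Sommers-Achar subgroup $C_i \subset \overline{A}_b(\mathcal{O}_H)$ is labeled by a subset $S_i \subset S$. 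The bijection $\sigma$ on the underlying ordered sets then lifts to a bijection on subsets, $S_i' \mapsto \sigma(S_i') = S_i$, and hence to the desired correspondence $\mathcal{O}_{N_i} \leftrightarrow (\mathcal{O}_H, C_i)$. The final identification with the Hitchin images $B_0/C_i$ is immediate from the discussion in the subsection on Sommers-Achar subgroups, invoking Conjecture \ref{Simpleconjecture}.

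The main obstacle I anticipate is \emph{part (a)}: establishing the precise one-to-one correspondence between marked pairs and independent non-special degenerations. The difficulty is that the duality map $d$ is a composite of transpose and D-collapse, and the D-collapse can interact in subtle ways when several marked pairs are close together in the partition or when even parts are interleaved. I would want to invoke the structural result on special pieces in type D (Theorem \ref{KPrestated}, which guarantees special pieces have size $2^s$) to confirm that the number of independent degenerations is indeed controlled by a clean combinatorial invariant, and then match that invariant against the count of marked pairs. A careful bookkeeping argument — perhaps phrased in terms of how the transpose converts the ``odd part with even number of odd parts between even parts'' structure of Definition \ref{def_special_orbit} into the C-partition structure of $[\mathcal{O}_H]^T$ — should make the matching transparent, but verifying it handles all interleaving patterns uniformly is where the real care is needed. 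The order-reversing property in part (b) is comparatively routine once this correspondence is pinned down, since it follows from the order-reversing nature of transposition.
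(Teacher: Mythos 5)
Your proposal follows essentially the same route as the paper: part (a) is established by a two-way correspondence --- each marked pair of $[\mathcal{O}_H]$ produces an independent non-special degeneration of $\mathcal{O}_{N_0}$ under transpose-plus-D-collapse, and conversely every non-special degeneration is traced back to a marked pair --- and part (b) then follows formally from the labeling of orbits in the special piece by subsets $S'_i\subset S'$ and of Sommers--Achar subgroups by subsets $S_i\subset S$. The one concrete device you leave as ``careful bookkeeping'' is supplied in the paper by the wall lemma (Lemma \ref{lemma2}), which shows that each distinct even part of $[\mathcal{O}_H]$ creates a wall across which no box moves during D-collapse, so the interleaving difficulty you correctly anticipate reduces block-by-block to the very-odd case.
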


\begin{proof}

Once we have proven part (a), part (b) follows trivially from the definitions we have given. The $\mathcal{O}_{N_i}$ are labeled by subsets $S'_i\subset S'$ and the Hitchin images $B_0/C_i$ are labeled by subsets $S_i\subset S$. So let us prove part (a).

Let $d$ be the number of marked pairs (labelling the generators of $\overline{A}_b(\mathcal{O}_H)$) and let $s$ be the number of independent non-special degenerations of the dual special Nahm orbit $\mathcal{O}_{N_0}=d(\mathcal{O}_H)$. In other words $|S| = d$ and $|S'| = s$. We proceed in two steps:

\begin{itemize}%
\item We show $s \ge d$ starting with a special Hitchin orbit with $\overline{A}_b \neq 1$ and studying its dual special Nahm orbit (Lemma \ref{lemma2} below) .

\item We show $d \ge s$ using Kraft-Procesi's description of the size of the special piece in terms of number of \emph{non-special degenerations} of the dual special Nahm orbit and then deducing the consequences for the special Hitchin orbit (Lemma \ref{lemma3} below).

\end{itemize}
First, we will need some combinatorial facts about how special nilpotent orbits in type-D behave under duality.

\begin{defn}
     We define the notion of a  \textbf{wall} for D-collapse operations. We say that the operation $D(\lambda)=\sigma$ for some partition $\lambda$ has a \textbf{wall} if there is a splitting of $\lambda,\sigma$ of the form $\lambda =\lambda_1 \mid \lambda_2$ , $\sigma =\sigma_1 \mid \sigma_2$ such that each part of $\lambda_2$ is smaller than the smallest part of $\lambda_1$ and
\end{defn}

\begin{displaymath}
D(\lambda_i) = \sigma_i.
\end{displaymath}
Here, the operation $\lambda_1 \mid \lambda_2$ is defined to yield a partition $\lambda$ that is formed by taking together the parts of $\lambda_1$ and $\lambda_2$.

Let $[q]$ be a partition and $[\tilde{q}]=[q]^T$ be its transpose. We will prove that corresponding to every distinct even part in  $[q]$, there is a wall in the operation $D([\tilde{q}])$.

\begin{lemma} (The wall lemma) \\
\label{lemma2}
\begin{enumerate}%
\item Let $\mathcal{O}$ be a special nilpotent orbit. Then, the operation $D([\mathcal{O}]^T)= [\mathcal{O}^\vee]$ had $l$ walls where $l$ is the number of distinct even parts in $[\mathcal{O}]$.
\item The wall corresponding to an even part $r_d^{e_d}$ occurs between the corresponding two parts $\tilde{q}_j,\tilde{q}_{j+1}$ in $[\mathcal{O}]^T$ that obey $\tilde{q}_j - \tilde{q}_{j+1} = e_d$. The wall divides $[\mathcal{O}]^T$ into two pieces $\lambda_1, \lambda_2$ such that $\lambda_1 \mid \lambda_2 = [\mathcal{O}]^T$ and the dual special orbit $d(\mathcal{O})$ is given by $D(\lambda_1) \mid D(\lambda_2)$. When the even part occurs at the begining of $[\mathcal{O}]$ we take $\tilde{q}_{j+1} = 0$ and the wall in $[\mathcal{O}]^T$ is between the last part and $0$.

\end{enumerate}
\end{lemma}

\begin{proof}

Let us also assume that $[\mathcal{O}]$ is not very even. When $[\mathcal{O}]$ is very even, D-collapse acts trivially on $[\mathcal{O}]^T$ and the statement \#1 of Lemma \ref{lemma1} is tautological.

Let us take the case where $[\mathcal{O}]$ has a single distinct even part $t$ occuring somewhere in the middle with multiplicity $e$ for some even number $e$. So, the partition label $[\mathcal{O}]$ is of the form

\begin{displaymath}
[\mathcal{O}]  = [\ldots, t^{e},\ldots ]
\end{displaymath}
and the transpose will be of the form

\begin{displaymath}
[\mathcal{O}]^T = [\tilde{q}_1,\ldots, \tilde{q}_j, \tilde{q}_{j+1},\ldots ]
\end{displaymath}
where $\tilde{q}_j$ and $\tilde{q}_{j+1}$ are even numbers (because $\mathcal{O}$ is special) that obey $\tilde{q}_j - \tilde{q}_{j+1} = e$. Now, since $t$ is the only distinct even part in $[\mathcal{O}]$, it follows that the parts on both sides of $t^e$ in $[\mathcal{O}]$ will be odd parts. This implies that both $\tilde{q}_j$ and $\tilde{q}_{j+1}$ occur with odd multiplicity. Furthermore, the last part of $[\mathcal{O}]$ will be an odd part and this implies that the first part of $[\mathcal{O}]^T$ will be an even part with odd multiplicity. In other words, the transpose partition actually has the form
\begin{displaymath}
[\mathcal{O}]^T = [\tilde{q}_1,\ldots, \tilde{q}_j^{2r+1}, \tilde{q}_{j+1}^{2s+1},\ldots ]
\end{displaymath}
with $\tilde{q}_1,\tilde{q}_j,\tilde{q}_{j+1}$ even. These are the only even parts that occur with odd multiplicity in $[\mathcal{O}]^T$. So, it is clear how D-collapse will act on $[\mathcal{O}]^T$.

\begin{displaymath}
D([\mathcal{O}]^T) = [\tilde{q}_1-1, \ldots, \tilde{q}_j+1, \tilde{q}_j^{2r}, \tilde{q}_{j+1}^{2s},\tilde{q}_{j+1}-1\ldots,1]
\end{displaymath}
So, it follows that there is a \textbf{wall} between $\tilde{q}_j$ and $\tilde{q}_{j+1}$.

The proof for multiple distinct even parts (occuring in the middle) proceeds in a similar way. The crucial point is that in the transpose, the corresponding $\tilde{q}_j$ is always an even part that can not be a source of a D-collapse move. And the part $\tilde{q}_{j+1}$ can not be a recipient.

When an even part $t$ occurs as the last part of $[\mathcal{O}]$, the partition $[\mathcal{O}]^T$ starts with $[\tilde{s}_1^t, \ldots]$ where $\tilde{s}_1$ is an even number equal to the total number of parts in $[\mathcal{O}]$. Under D-collapse, the parts $\tilde{s}_1^t$ will be unchanged. So, it follows that the D-collapse operation $D([\mathcal{O}]^T)$ has a wall after the last occurance of $\tilde{s}_1$ .

When an even part $t$ occurs as the first part of $[\mathcal{O}]$ and with multiplicity $e$, the partition $[\mathcal{O}]^T$ will end with $[\ldots,e]$ and there will be an odd number of even parts with odd multiplicities that are smaller than $e$. So, in the D-collapse operation, the part $e$ will be turned into $e+1$. It follows that there is a wall between $e$ and $0$.
\end{proof}

\paragraph{\textbf{Example}} In terms of Young diagrams, the presence of a \emph{wall} implies that no box is moved across wall during the D-collapse operation. Here is an example with $[\mathcal{O}]=[5^2,4^2,3^2,2^4,1^2]$.

\begin{center}
\begin{tikzpicture}
\tikzset{pics/yb/.style n args={2}{
code= {\draw ({.25*#1},{-.25*#2}) rectangle ++(.25,.25);}}}
\path
pic {yb={0}{0}}
pic {yb={0}{1}}
pic {yb={0}{2}}
pic {yb={0}{3}}
pic {yb={0}{4}}
pic {yb={1}{0}}
pic {yb={1}{1}}
pic {yb={1}{2}}
pic {yb={1}{3}}
pic {yb={1}{4}}
pic {yb={2}{0}}
pic {yb={2}{1}}
pic {yb={2}{2}}
pic {yb={2}{3}}
pic {yb={3}{0}}
pic {yb={3}{1}}
pic {yb={3}{2}}
pic {yb={3}{3}}
pic {yb={4}{0}}
pic {yb={4}{1}}
pic {yb={4}{2}}
pic {yb={5}{0}}
pic {yb={5}{1}}
pic {yb={5}{2}}
pic {yb={6}{0}}
pic {yb={6}{1}}
pic {yb={7}{0}}
pic {yb={7}{1}}
pic {yb={8}{0}}
pic {yb={8}{1}}
pic {yb={9}{0}}
pic {yb={9}{1}}
pic {yb={10}{0}}
pic {yb={11}{0}}
(3.25,-.25) edge[->] node[above] {$\footnotesize\text{transpose}$} (5.5,-.25)
(5.75,0)
pic {yb={0}{0}}
pic {yb={0}{1}}
pic {yb={0}{2}}
pic {yb={0}{3}}
pic {yb={0}{4}}
pic {yb={0}{5}}
pic {yb={0}{6}}
pic {yb={0}{7}}
pic {yb={0}{8}}
pic {yb={0}{9}}
pic {yb={0}{10}}
pic[red] {yb={0}{11}}
pic {yb={1}{0}}
pic {yb={1}{1}}
pic {yb={1}{2}}
pic {yb={1}{3}}
pic {yb={1}{4}}
pic {yb={1}{5}}
pic {yb={1}{6}}
pic {yb={1}{7}}
pic {yb={1}{8}}
pic {yb={1}{9}}
pic {yb={2}{0}}
pic {yb={2}{1}}
pic {yb={2}{2}}
pic {yb={2}{3}}
pic {yb={2}{4}}
pic[red] {yb={2}{5}}
pic {yb={3}{0}}
pic {yb={3}{1}}
pic {yb={3}{2}}
pic {yb={3}{3}}
pic {yb={4}{0}}
pic[red] {yb={4}{1}}
(6.25,.25) edge[ultra thick] (6.25,-2.75)
(6.75,.25) edge[ultra thick] (6.75,-2.75)
(7.25,-.25) edge[->] node[above] {$\footnotesize\text{D-collapse}$} (9.5,-.25)
(9.75,0)
pic {yb={0}{0}}
pic {yb={0}{1}}
pic {yb={0}{2}}
pic {yb={0}{3}}
pic {yb={0}{4}}
pic {yb={0}{5}}
pic {yb={0}{6}}
pic {yb={0}{7}}
pic {yb={0}{8}}
pic {yb={0}{9}}
pic {yb={0}{10}}
pic {yb={1}{0}}
pic {yb={1}{1}}
pic {yb={1}{2}}
pic {yb={1}{3}}
pic {yb={1}{4}}
pic {yb={1}{5}}
pic {yb={1}{6}}
pic {yb={1}{7}}
pic {yb={1}{8}}
pic {yb={1}{9}}
pic[red] {yb={1}{10}}
pic {yb={2}{0}}
pic {yb={2}{1}}
pic {yb={2}{2}}
pic {yb={2}{3}}
pic {yb={2}{4}}
pic {yb={3}{0}}
pic {yb={3}{1}}
pic {yb={3}{2}}
pic {yb={3}{3}}
pic[red] {yb={3}{4}}
pic {yb={4}{0}}
pic[red] {yb={5}{0}}
(10.25,.25) edge[ultra thick] (10.25,-2.75)
(10.75,.25) edge[ultra thick] (10.75,-2.75)
;
\end{tikzpicture}
\end{center}

Corresponding to the two distinct even parts in $[\mathcal{O}]$, we have two wall in the transpose $[\mathcal{O}]^T = [12,10|6,4|2]$. It is clear that no box moves across the walls during the subsequent D-collapse step.

We'll now prove $s \ge d$.

\begin{lemma}
\label{lemma3}
Corresponding to every marked pair in a special nilpotent orbit, there is an independent \emph{non-special degeneration} for the dual special Nahm orbit, i.e. $s\geq d$.
\end{lemma}

\begin{proof}

The statement is quite obvious when the special nilpotent orbit is very even. They have $d=0$ and the dual special piece is always trivial. So, we will not consider these below.

\subsubsection*{\textbf{Very odd orbits}}
We start instead with a \emph{very odd} orbit with $\bar{A}_b = (\mathbb{Z}_2)^d$. This means that it has $d$ marked pairs. Take the first marked pair from the right. Let it be of the form $[\ldots, p_{2j}, p_{2j+1}  ,\ldots]$ where $p_{2j}$ and $p_{2j+1}$ are odd numbers that differ by some even number $2k$ ie $p_{2j}-p_{2j+1}=2k$ for some $k \gt 0$. Now, the $[\mathcal{O}]^T$ will have the following form :

\begin{equation}
\label{veryoddtranspose}
[\mathcal{O}]^T  =  [(2m)^{2n+1},\{\ldots\},(2j)^{2k},\ldots]
\end{equation}
where $2n+1$ is the smallest part of $[\mathcal{O}]$ and $2m$ is the total number of parts in $[\mathcal{O}]$ and the parts in $\{\ldots\}$ are all odd parts with even multiplicities. Now, D-collapse acts in the following way :

\begin{itemize}%
\item If $d=0$, we have $j=0$. And D-collapse will yield $[(2m)^{2n},\ldots,1]$
\item If $d = 1$, D-collapse yields $[(2m)^{2n},\ldots,2j+1,(2j)^{2k-2},2j-1,\ldots,1]$

\end{itemize}
It is clear that the part $2j+1$ in $D([\mathcal{O}]^T)$ will have even parity. By Kraft-Procesi, the dual Nahm special orbit now admits a single small degeneration. If $k=1$, it is a small degeneration of type $(a)$ and if $k \gt 1$, it is a small degeneration of type $(g)$ and the result of the deformation would yield a non-special orbit. We call (Def \ref{eq:non-specialdegen}) such small degenerations as \emph{non-special degenerations}.

Now, when $d \gt 1$, we proceed in a similar way. We handle one marked pair in $[\mathcal{O}]$ at a time. Each of these lead to a single non-special degeneration in $D([\mathcal{O}]^T)$. So, it is clear that $D([\mathcal{O}]^T)$ will have at least $d$ non-special degenerations.

\subsubsection*{\textbf{Orbits with even parts}}
Now, we consider special nilpotent orbits that have even parts. The story is similar except that the even parts in $[\mathcal{O}]$ create walls in the D-collapse of $[\mathcal{O}]^T$.

Within the odd chunks $u_i$ that are between the even parts, the argument proceeds as in the case of very odd orbits.

As a specific case, let us say there is one distinct even part in $[\mathcal{O}]$. Let us also assume that this even part is not the first or the last part. Corresponding to this, there is is a \emph{wall} in $D([\mathcal{O}]^T)$ and there is a splitting of $[\mathcal{O}]^T$ into two pieces $\lambda_1, \lambda_2$ such that $[\mathcal{O}]^T = \lambda_1 \mid \lambda_2$ and each part of $\lambda_2$ is smaller than the smallest part of $\lambda_1$.

The first part of $\lambda_i$ will necessarily be an even part occurring with odd multiplicity. In fact, the parts of $\lambda_i$ will again be as in \eqref{veryoddtranspose}  above. So, the D-collapse of $\lambda_i$ proceeds as in the previous section. Any marked pairs within $\lambda_i$ will contribute the existence of an independent non-special degeneration in $[\mathcal{O}^\vee]$.

So, we get a single non-special degeneration in $[\mathcal{O}^\vee]$ corresponding to every marked pair in $[\mathcal{O}]$.

So, we have shown that Lemma \ref{lemma3} is true, i.e.~$s \ge d$.
\end{proof}

\begin{lemma}
\label{lemma4}
Corresponding to every non-special degeneration in a special Nahm orbit, there is a marked pair in the dual Hitchin orbit, i.e.~$d\geq s$. \end{lemma}

\begin{proof}
Recalling Def \ref{non-special-deformation}, a non-special degeneration takes the form
\begin{equation}
\label{lemma2pf}
[\dots,(2r+1)_{2j},(2r)^{2m},(2r-1)_{2j+2m+1},\dots] \to
     [\dots,(2r)_{2j},(2r)^{2m},(2r)_{2j+2m+1},\dots].
\end{equation}
The transpose of this statement is 
\begin{equation}
\label{lemma2pftranspose}
 [\dots,(2j+2m)_{2r},(2j)_{2r+1},\dots]\to
 [\dots,(2j+2m+1)_{2r},(2j-1)_{2r+1},\dots]
\end{equation}
We now need to distinguish two cases:
\begin{itemize}[leftmargin=75px]
    \item[$m>0$:] a (g)-type non-special degeneration
    \item[$m=0$:] an (a)-type non-special degeneration
\end{itemize}

When $m>0$, by the wall lemma, the D-collapse of the first partition in \eqref{lemma2pftranspose} is
\[
[\dots,(2j+2m+1)_{2r},(2j-1)_{2r+1},\dots],
\]
so in both the source and the target we have the marked pair
\begin{equation}
\label{lemma4themarkedpair}
[\dots,(2j+2m+1)_{2r},(2j-1)_{2r+1},\dots]\quad.
\end{equation}
This, of course, is as it should be: by definition the dual of the two Nahm partitions in \eqref{lemma2pf} is the \emph{same} Hitchin partition.

When $m=0$, we no longer have the wall lemma to show that
\begin{equation}
\label{lemma2pfdog}
D([\dots,(2j)_{2r},(2j)_{2r+1},\dots])=
[\dots,(2j+1)_{2r},(2j-1)_{2r+1},\dots]
\end{equation}
Nevertheless,
\[
D([\dots,(2j+1)_{2r},(2j-1)_{2r+1},\dots])=
[\dots,(2j+1)_{2r},(2j-1)_{2r+1},\dots]
\]
and since the two Nahm partitions in \eqref{lemma2pf} map under duality to the same Hitchin partition, \eqref{lemma2pfdog} continues to hold for the transposes.

Thus each non-special degeneration \eqref{lemma2pf} of  $\mathcal{O}_N$ leads to a marked pair \eqref{lemma4themarkedpair} in $\mathcal{O}_H$.
\end{proof}


Now consider $\mathcal{O}$ a special Hitchin orbit in type-D with $\bar{A}_b(\mathcal{O}) = (\mathbb{Z}_2)^d$. By definition (Def \ref{abarb-definition}) there are precisely $d$ marked pairs.
This proves part (a) of Theorem \ref{odd-type-theorem}. Part (b) follows immediately.
\end{proof}

The 4d $\mathcal{N}=2$ SCFTs of class-S are labeled, in part, by the collection of \emph{Nahm} nilpotent orbits, $\mathcal{O}_N$. The correspondence between Nahm nilpotent orbits $O_{N_i}$ in a special piece and \emph{pairs} $(\mathcal{O}_H,C_i)$ consisting of the dual Hitchin nilpotent $O_H=d(O_{N_i})$ and a ``Sommers-Achar'' subgroup $C_i$ was proposed in  \cite{Chacaltana:2012zy}. Their proposal was based on the observation that the 4d Coulomb branches of the $\mathcal{N}=2$ SCFTs have the form $B_0/C_i$. Theorem \ref{odd-type-theorem} offers a refinement of that proposal, by giving an \textit{a priori} count of the number of orbits in the special piece along with a precise identification of the group $\overline{A}_b(\mathcal{O})$ and its Sommers-Achar subgroups $C_i$.

\subsection{Independence of even and odd type constraints}

We have devoted two separate sections to study the two different phenomena that are new when we study the Hitchin map in type-D. We have presented these are separate discussions without allowing for the possibility that they may interact/interfere with each other. It turn out this never happens.

Say we have a marked pair $(p_{2j},p_{2j+1})$ in the sense of Def \ref{marked-pair-definition}.  Let $2k = \sum_{i=1}^{2j} p_i$. There is an odd-type constraint for the term $c_{2k}$. Since $p_{2j+1}$ is also odd, this guarantees that $c_{2k}$ is not part of any system of even type constraints (of the kind we encountered in Section \ref{sec:even_and_very_even}). So, even when odd and even type constraints occur in the local Hitchin image for the same special orbit $\mathcal{O}$, they necessarily involve mutually distinct subsets of the $c_{2k}$. In this sense, the even and odd type constraints are independent of each other. 

\subsection{Odd type constraints, mass deformations and the Poisson integrable system}\label{sec:poisson}

Throughout this paper, we have confined our study to the geometry of the Coulomb branch underlying a \textit{conformal} $\mathcal{N}=2$ theory arising from a tame Hitchin system. However, there is a class of $\mathcal{N}=2$ preserving relevant deformations called \textit{mass deformations} of the Coulomb branch geometry. Here, we briefly discuss the effect of these deformations on the geometry of the Hitchin system and their close connection with the phenomenon of odd-type constraints. 

A key notion is that of a sheet, which is a connected component of the union of all orbits of a fixed dimension in the Lie algebra. Each such sheet contains a unique nilpotent orbit, sometimes referred to as the boundary of the sheet. However, outside of type A, a given nilpotent orbit can be the boundary of several sheets. This Lie-algebraic observation underlies the multiple Hitchin systems that we have encountered above.

In terms of the Hitchin system, the passage from a conformal theory to a mass deformed theory corresponds to allowing the residue of the Higgs field take generic values in a sheet in the Lie algebra \cite{Balasubramanian:2018pbp} whose boundary is the nilpotent orbit appearing in the description of the conformal limit. When we allow the mass deformation parameters to vary, they act as additional Casimirs (in the sense of \cite{markman1994spectral}) and we get a Poisson integrable system. 

Outside of type A, it is possible for a given nilpotent orbit to occur at the boundary of more than one sheet. Identifying the allowed mass deformations (i.e., identifying the Poisson integrable system in which the Hitchin system of our conformal theory is a symplectic leaf) is tantamount to a choice of one of those sheets. The procedure to identify the exact sheet corresponding to a local mass deformation was outlined in \cite{Balasubramanian:2018pbp}\footnote{Interestingly, it was found that only \textit{special sheets} (a notion defined in \cite{Balasubramanian:2018pbp}) arise as local mass deformations in Class S theories.}. For a given Hitchin nilpotent orbit $\mathcal{O}_H$ we have the following pieces of data which are in 1-1 correspondence
\begin{enumerate}
\item a ``Sommers-Achar'' subgroup $C\subset\overline{A}_b(\mathcal{O}_H)$. For $C=1$, the local Hitchin image has a factor of $\mathbb{C}^s$, on which $\overline{A}_b(\mathcal{O}_H)$ acts; for the other choices of $C$, that factor in the Hitchin image is replaced by $\mathbb{C}^s/C$.
\item a ``Nahm nilpotent orbit'', $\mathcal{O}_N$ in the dual special piece of $\mathcal{O}_H$ (i.e., a nilpotent orbit such that $d(\mathcal{O}_N)=\mathcal{O}_H$)
\item a ``special" sheet whose boundary is $\mathcal{O}_H$.
\end{enumerate}

 The correspondence between (1) and (2) was given in Theorem \ref{odd-type-theorem}, where we related the set of Nahm nilpotent orbits in the dual special piece to ``Achar-Sommers'' subgroups $C\subset \overline{A}_b (\mathcal{O}_H)$. The correspondence between (2) and (3) is that the Bala-Carter Levi of the Nahm nilpotent orbit $\mathcal{O}_N$ is the sheet Levi\footnote{The sheets in a complex Lie algebra $\mathfrak{g}$ are labeled by pairs $(\mathfrak{l},\mathcal{O}_\mathfrak{l})$, where $\mathfrak{l}$ is a Levi subalgebra (which we call the ``sheet Levi'') and $\mathcal{O}_\mathfrak{l}$ is a rigid nilpotent orbit in $\mathfrak{l}$. The (unique) nilpotent orbit $\mathcal{O}$ in $\mathfrak{g}$ which is the boundary of the sheet $(\mathfrak{l},\mathcal{O}_\mathfrak{l})$ is given by orbit induction : $\mathcal{O}=Ind_\mathfrak{l}^\mathfrak{g}(\mathcal{O}_\mathfrak{l}).
 $ (See Appendix A.2 of \cite{Balasubramanian:2018pbp} for a review.)} of the special sheet.

The upshot is that the choice of which odd-type constraints to impose in the conformal case amounts to a choice of what the allowed mass deformations are (i.e., into which Poisson integrable system our conformal theory embeds as a symplectic leaf).

\paragraph{Birational sheets:} In recent work, Losev has developed a theory of \textit{birational sheets} \cite{MR4359565}, which is a refinement of the usual theory of sheets. We expect a close compatibility between Theorem \ref{odd-type-theorem} and Losev's theory of birational sheets. In particular, we expect that the stabilizer of the Sommers-Achar subgroup $C$ (as a subgroup in the component group $A(\mathcal{O})$ ) is related to the degree of the generalized Springer map(s) associated to a particular sheet.  To make things more precise, we formulate the following conjecture.

\begin{conjecture}  Let $\mathcal{O}$ be the special nilpotent orbit occuring in the conformal limit and let $C$ be the Sommers-Achar subgroup associated to one of the Hitchin images corresponding to $\mathcal{O}$ (according to Theorem \ref{odd-type-theorem}). Now, denote by $Z_C$, the stabilizer of $C$ as a subgroup in the component group $A(\mathcal{O})$. Now, let $\mu$ be the generalized Springer map associated to the sheet corresponding to the mass deformed defect (identified as in \cite{Balasubramanian:2018pbp}). Our conjecture is that  
$\text{deg} (\mu) = \vert Z_C \vert $.
\label{degree-conjecture}
\end{conjecture}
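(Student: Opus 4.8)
The plan is to reduce the conjectured equality $\deg(\mu)=|Z_C|$ to a comparison of two component-group computations in type D: one that controls the degree of the generalized Springer map $\mu$ of the special sheet, and one that gives the order of the stabilizer $Z_C$. For the degree, I would start from the sheet datum $(\mathfrak{l},\mathcal{O}_\mathfrak{l})$ supplied by the correspondence of Theorem \ref{odd-type-theorem}, with $\mathfrak{l}$ the Bala--Carter Levi of $\mathcal{O}_N$ and $\mathcal{O}_H=\mathrm{Ind}_\mathfrak{l}^{\mathfrak{g}}(\mathcal{O}_\mathfrak{l})$ the boundary orbit. Over the open orbit $\mathcal{O}_H$, the degree of $\mu:G\times_P(\overline{\mathcal{O}}_\mathfrak{l}+\mathfrak{n})\to\overline{\mathcal{O}}_H$ is governed by the decomposition of the induced (trivial) local system $\mathrm{Ind}_\mathfrak{l}^{\mathfrak{g}}(\underline{\mathbb{C}})$ into $G$-equivariant local systems along $\mathcal{O}_H$, equivalently by a count inside the component group $A(\mathcal{O}_H)$. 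First I would make this count completely explicit in type D, where all of these groups are elementary abelian $2$-groups and the induction data is encoded entirely by the partition combinatorics (marked pairs, walls, and the non-special degenerations of Lemmas \ref{lemma2}--\ref{lemma4}).

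Next I would unwind the definition of $Z_C$ and compute $|Z_C|$ directly. Since the Sommers--Achar subgroup $C$ is specified by a subset $S_i\subset S$ of marked pairs (Definitions \ref{abarb-definition}, \ref{def_Hitchin_base}) and the ambient $A(\mathcal{O}_H)$ has a combinatorial basis indexed by the odd parts of $[\mathcal{O}_H]$, the subgroup of $A(\mathcal{O}_H)$ associated to $C$ and its stabilizer can be read off as explicit powers of $2$. The key step is then the matching: I would invoke the Achar--Sommers duality \cite{MR1850659,MR1927953}, which is precisely the mechanism attaching the pair $(\mathcal{O}_H,C)$ to the Nahm orbit $\mathcal{O}_N$ and hence to the sheet. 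Under this duality the component-group data on the Nahm side, $A_L(\mathcal{O}_\mathfrak{l})$ together with its image in $A(\mathcal{O}_H)$, is transported to the stabilizer data $Z_C$ on the Hitchin side, and the content of the conjecture is that the two transported quantities agree.

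To organize this identification I would pass through Losev's theory of birational sheets \cite{MR4359565}. A birational sheet is, by construction, one whose generalized Springer map is birational, while an ordinary sheet is a union of birational sheets; $\deg(\mu)$ therefore measures the discrepancy between $A(\mathcal{O}_H)$ and the subgroup that is ``visible birationally.'' I expect the birational sheets inside our special sheet to be indexed by cosets controlled by $Z_C$, so that $\deg(\mu)=|Z_C|$ falls out. Concretely, one checks that enlarging $C$ corresponds to passing to a birational sub-sheet and shrinking the visible component group by exactly the predicted factor, and one verifies the extreme cases for consistency: $C=1$ (the special Nahm orbit, Hitchin image $B_0$, parametrized by the $a_{k_i}$) and $C=\overline{A}_b$ (the smallest non-special orbit, Hitchin image $B_0/\overline{A}_b$, parametrized by the $c_{2k}$) should sit at opposite ends of this family of degrees.

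The hard part will be making the Achar--Sommers duality quantitative \emph{at the level of degrees}. The duality is phrased as a combinatorial map on pairs $(\mathcal{O},C)$ and does not, in its standard form, directly output the generic fiber size of a Springer map; bridging the two requires either a direct geometric computation of $\mu^{-1}(e)$ for generic $e\in\mathcal{O}_H$ (feasible but tedious, using the explicit parabolic induction of the partition data in type D), or a conceptual input from Losev's framework identifying $\deg(\mu)$ with a component-group index and then matching that index to $Z_C$. I expect that once the degree formula is pinned down, the type-D case yields to a uniform partition-combinatorial argument; the principal residual risk is a subtle mismatch between Lusztig's canonical quotient $\overline{A}(\mathcal{O}_H)$ (in which $C$ and $\overline{A}_b$ naturally live) and the full component group $A(\mathcal{O}_H)$ in which $Z_C$ is taken, so reconciling those two groups carefully is where I would concentrate the effort.
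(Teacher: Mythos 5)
There is no proof to compare against here: the paper states this as Conjecture \ref{degree-conjecture} and offers no argument for it beyond the surrounding heuristics (the expected link to Losev's birational sheets and the expectation that $C=A(\mathcal{O})$ characterizes the birational case). Your submission is likewise not a proof but a research plan, and by your own admission the decisive steps are left open. Concretely: (1) you never establish the formula identifying $\deg(\mu)$ with a component-group index --- you note that this requires ``either a direct geometric computation of $\mu^{-1}(e)$ \ldots or a conceptual input from Losev's framework'' and carry out neither; (2) the ``key step,'' matching the Achar--Sommers combinatorial datum $(\mathcal{O}_H,C)$ to the generic fiber size of the Springer map, is precisely the content of the conjecture, so invoking the duality here is circular unless you supply the quantitative bridge you yourself flag as ``the hard part''; (3) the discrepancy between Lusztig's quotient $\overline{A}(\mathcal{O}_H)$, where $C$ and $\overline{A}_b$ live, and the full component group $A(\mathcal{O}_H)$, where $Z_C$ is taken, is identified as a risk but not resolved --- and the paper's own Appendix \ref{App:Lusztig} table shows these groups differ substantially (e.g.\ $[11,9,7,5^2,3^2,1]$ has $\overline{A}=\mathbb{Z}_2$ but $A=(\mathbb{Z}_2)^4$), so this is not a minor bookkeeping issue.

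Your overall strategy --- compute both sides as explicit powers of $2$ from the partition combinatorics and match them via the wall/marked-pair/non-special-degeneration machinery of Lemmas \ref{lemma2}--\ref{lemma4} --- is a sensible attack and consistent with the consistency checks the paper itself performs at the extreme cases $C=1$ and $C=\overline{A}_b$. But until the degree of the generalized Springer map is pinned down independently of the duality (for instance via the decomposition of the pushforward of the trivial local system under parabolic induction, made explicit for type-D partition data), what you have is a sharpened restatement of the conjecture, not progress toward proving it.
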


The case of birational sheets corresponds to the special case where $Z_C=1$.  Our expectation is that $C=A(\mathcal{O})$ is a necessary and sufficient condition for a mass deformation to correspond to a birational sheet. If we have a non-trivial special piece on the Nahm size,  $C=A(\mathcal{O})$ can possibly occur only for the defect whose Nahm label is the smallest non-special orbit in a special piece. \footnote{It is a fact about the geometry of the special pieces that there is a unique such non-special orbit.} So, our expectation is that, when multiple sheets share a nilpotent orbit as their boundary, the birational sheet corresponds to the one whose sheet Levi is the Bala-Carter Levi of the smallest non-special orbit in the dual special piece.  

\begin{remark}
    Although this paper is confined to discussing the case of type D Hitchin systems, we expect our Conjecture \ref{degree-conjecture} to hold in the $E_6,E_7,E_8$ cases as well. 
\end{remark}

\begin{remark}
    In discussions with E. Sommers and J. Adams, we learnt of earlier work of J. Adams \cite{jeff-adams} where a duality between the smallest non-special orbit and the existence of birational Springer maps was first noticed from the point of view of the theory of unitary representations of non-compact groups. We believe that our conjecture above can offer a different perspective on this duality.
\end{remark}

\section{Global Considerations}
\label{globalconsiderations}

\subsection{Even type constraints}
\label{globalconsiderations-even}
When the even part occurs in middle or at the end, we find a difference between the Hitchin bases in the CD approach and the BK approach. In a global context, these lead to \emph{distinct} Hitchin systems.

There is a third way to associate a Hitchin system to the data of a punctured Riemann surface, where each pucture is ``decorated" by a nilpotent orbit in the Lie algebra, $\mathfrak{g}$: one can construct the class-S theory with that data. This is a 4D $\mathcal{N}=2$ superconformal field theory whose Coulomb-branch geometry is governed by a Hitchin system. That Hitchin system is the one whose Hitchin base is $B_{\text{CD}}$. 

In some cases, we know that $B_{\text{BK}}$ is the base of a \emph{different} Hitchin system which may (or may not) arise from a 4D $\mathcal{N}=2$ theory.

\subsubsection{Example with even part at the end}

Consider the Hitchin system for $\mathfrak{g}=\mathfrak{so}(10)$ on the three-punctured sphere with Hitchin partitions $[3^2,2^2],[3^3,1],[5,3,1^2]$. Before imposing the constraint(s), the naive Hitchin base has dimension
\begin{displaymath}
b_k = (0,0,1,1;0)
\end{displaymath}
for $k=2,4,6,8;5$. The corresponding spectral curve (in the conventions of Appendix \ref{HS3p}) is
\[
\Sigma=\left\{0 = w^{10} +xy(x-y)\bigl(w^4 xy(x-y) c_6+ w^2x^2y^2(x-y) c_8\bigr)\right\}
\]
Since $\tilde{c}=0$, this curve is reducible, with a non-reduced $w^2=0$ and an irreducible octic
\[
\Sigma=\left\{0 = w^2\bigl[w^{8} +x^2y^2(x-y)^2\bigl(w^2 c_6+ xy c_8\bigr)\bigr]\right\}
\]
From the $2^2$ at the end of $[3^2,2^2]$, we have constraints which take the form
\[
\begin{split}
  c_6&=a_3^2\\
  c_8&=2 a_5 a_3\\
  \tilde{c}&= a_5
\end{split}
\]
B\&K would eliminate the $a_k$ and write these as the singular surface in $\mathbb{C}^3$,
\[
c_8^2=4 c_6 \tilde{c}^2
\]
Globally, we have  $\tilde{c}=0$. So this just amounts to saying that $c_8=0$, yielding the spectral curve 
\[
\Sigma_{\text{BK}}=\left\{0 = w^4\bigl[w^{6} +c_6 x^2y^2(x-y)^2\bigr]\right\}
\]
fibered over the complex $c_6$-plane.
C\&D would \emph{solve} the constraints, yielding $c_6=a_3^2,\; c_8=0$. The spectral curve is then
\[
\Sigma_{\text{CD}}=\left\{0=w^4\left(w^3 +i a_3 xy(x-y)\right)\left(w^3 -i a_3 xy(x-y)\right)\right\}
\]
fibered over the complex $a_3$-plane. The sextic curve has become reducible, with the two components being cubics. The map from C\&D's Hitchin base to B\&K's is just the double-cover map $a_3\to a_3^2\equiv c_6$.

The involution $\iota: w\to -w$ exchanges the two cubics. So the Prym of the sextic is the Jacobian of one of the cubics. Indeed, from the work of Argyres et al \cite{Argyres:2016xua} who exhaustively classified all the possible 1-dimensional Hitchin bases, we know that the Hitchin system in this case is an elliptic fibration over the $a_3$-plane with an isolated IV${}^*$ singularity at the origin\footnote{This is the $A_2$ Hitchin system on the 3-punctured sphere with regular nilpotent orbits at each of the punctures.}. 
The entire family of cubic curves has a $\mathbb{Z}_6$ symmetry under which $(x,y,w,a_3)\to(y,x,e^{2\pi i/6}w,a_3)$. Under this $\mathbb{Z}_6$, the Seiberg-Witten differential,
\begin{equation}\label{SWdifferential}
\lambda_{\text{SW}}= \frac{w(xdy-ydx)}{x y(x-y)}
\end{equation}
transforms by the same $6^{\text{th}}$-root of unity. The family has another $\mathbb{Z}_2$ symmetry $(x,y,w,a_3)\to (y,x,w,-a_3)$, under which the Seiberg-Witten differential is invariant\footnote{Since we are describing an \emph{elliptic} fibration, we should specify a choice of section ($x=y$) which is also invariant under the $\mathbb{Z}_2$.}. The quotient by this $\mathbb{Z}_2$ is --- after suitably resolving the singularities of the central fiber --- an elliptic fibration over the $c_6=a_3^2$ plane with an isolated II${}^*$ singularity at the origin. Away from the origin, the fibers are again cubic curves with a $\mathbb{Z}_6$ symmetry under $(x,y,w,c_6)\to(y,x,e^{2\pi i/6}w,c_6)$. The monodromy around the origin has order-3 (instead of order-6), as once around the origin in the $c_6$-plane is twice around the origin in the $a_3$-plane.

The upshot is that CD's prescription yields the IV${}^*$ elliptic fibration, which is the Hitchin system governing the Coulomb branch of the $E_6$ Minahan-Nemeschansy theory, whereas BK's prescription yields the II${}^*$ elliptic fibration, which is the Hitchin system governing the $E_8$ Minahan-Nemeschansky theory. Both of them are perfectly fine 4D $\mathcal{N}=2$ theories; but it is the former that arises in the class-S construction from the puncture data given.

Now consider replacing $[5,3,1^2]$ in the above example with $[5^2]$. Before imposing the constraints, the spectral curve is
\begin{equation}\label{BKglobalEx2spectral}
\Sigma_{\text{BK}}=\left\{0 = w^{10} +x^2y^2(x-y)^2\left(w^4 c_6+ w^2xy c_8+ x^2 y^2 \tilde{c}^2\right)\right\}
\end{equation}
Now, $\tilde{c}$ no longer vanishes and so the B\&K's spectral curve is fibered over the genuinely singular base
\begin{equation}\label{BKglobalEx2base}
B_{\text{BK}}= \{c_8^2=4 c_6 \tilde{c}^2\}\subset \mathbb{C}^3
\end{equation}
C\&D solve the constraints and their Hitchin base is the affine space spanned by $a_3$ and $a_5=\tilde{c}$. Their spectral curve
\[
\Sigma_{\text{CD}} = \Bigl\{ 0 = \bigl(w^5 + i x y(x-y)(a_3 w^2 + \tilde{c} x y)\bigr)\bigl(w^5 - i x y(x-y)(a_3 w^2 + \tilde{c} x y)\bigr) \Bigr\} \to B_{\text{CD}}
\]
Again $\Sigma_{\text{CD}}$ is reducible, with two quintic components, the involution $\iota$ exchanges the two quintics and the Prym of $\Sigma_{\text{CD}}$ is the Jacobian of one of the quintics. This rank-2 Hitchin system is easily recognized to be the one associated to the class-S theory of type   $A_4$ on the 3-punctured sphere with nilpotent orbits $[3,2],[3,2],[5]$ at the punctures (this is called the $R_{2,5}$ SCFT in \cite{Chacaltana:2010ks}\footnote{More precisely, it is the $R_{2,5}$ SCFT with two free hypermultiplets.}).

$B_{\text{CD}}$ is the normalization of $B_{\text{BK}}$. The map $f: B_{\text{CD}}\to B_{\text{BK}}$ is 1-1 everywhere away from the locus $\tilde{c}=0$, where it is 2-1. We know there is a Hitchin fibration over $B_{\text{CD}}$ and this Hitchin system is the one which governs the Coulomb branch geometry of the class-S theory  of type $D_5$, with the above puncture data. What is unclear is whether there is a Hitchin system with base \eqref{BKglobalEx2base} and spectral curve \eqref{BKglobalEx2spectral}
and, if there is, whether it is the Hitchin system associated to some 4D $\mathcal{N}=2$ SCFT.


\subsubsection{Example with even part in the middle}

Consider the Hitchin system for $G=SO(12)$ on a thrice punctured sphere with Hitchin partitions 
$[3^2,2^2,1^2][3^4][5^2,1^2]$.
The spectral curve takes the form
\[\label{BKD6spectralexample}
\Sigma_{\text{BK}}=\left\{0=w^2[w^{10}+x^2y^2(x-y)^2 (c_6 w^4+c_8 xyw^2+c_{10}x^2 y^2)]\right\}
\]
There is a single even-type constraint associated to the $2^2$ in $[3^2,2^2,1^2]$, which yields
\[
c_6 u^2 + c_8 u +c_{10}= (a_3 u + a_5)^2
\]
BK's Hitchin base is the singular quadric
\[
B_{\text{BK}}= \{c_8^2 = 2 c_6 c_{10}\}\subset \mathbb{C}^3
\]
and we have the spectral curve \eqref{BKD6spectralexample} fibered over this base. Of course, this singular quadric is just isomorphic to $\mathbb{C}^2/\mathbb{Z}_2$.

CD's Hitchin base is $B_{\text{CD}}=\mathbb{C}^2$, with coordinates $(a_3,a_5)$. In those coordinates, the spectral curve is (further) reducible
\[
\Sigma_{\text{CD}}= \left\{0=w^2[w^5+ixy(x-y)(a_3 w^2+a_5xy][w^5-ixy(x-y)(a_3 w^2+a_5xy]\right\}
\]
Ignoring the trivial non-reduced $w=0$ component, we see that this is once again the Hitchin system for the $R_{2,5}$ theory. 

As we said, when the even part occurs in the middle of the partition, the map $B_{\text{CD}}\to B_{\text{BK}}$ is generically 2:1. Here, $B_{\text{BK}}$ is the orbifold $B_{\text{BK}}=B_{\text{CD}}/\mathbb{Z}_2$, where $\mathbb{Z}_2: (a_3,a_5)\to (-a_3,-a_5)$

\subsection{Odd type constraints}
\label{globalconsiderations-odd}

When \emph{odd-type} constraints occur, we want to assert that there is a \emph{choice} . We can choose to take $c_{2k}$ as a generator in the (local) ring of invariants or take $a_k$ as the generator. As discussed in Section \ref{sec:poisson}, this corresponds to a choice of what the allowed mass deformations are (or, equivalently, how to embed our symplectic Hitchin system in a Poisson integrable system).

This choice leads to distinct Hitchin systems, as we will illustrate with some examples below. As in the previous subsection, our  conventions are those of  Appendix \ref{HS3p}.

Consider the 3-punctured sphere in the $D_5$ theory, where the nilpotent orbits at $p_1,p_2,p_3$  are given by the Hitchin partitions $[2^4,1^2],[5,3,1^2],[9,1]$. The spectral curve is a homogeneous decic;
Because $\tilde{\phi}=0$, the curve is reducible, with one non-reduced component $w^2=0$, and the other component is a homogeneous octic. Before imposing the constraints, the space of coefficients in the octic is 4-dimensional.

\begin{displaymath}
0 = w^2\left[w^8 + x^2 y(x-y)\left(c_4 w^4 + w^2 x y c_6 + x^2 y(x c_8 - y c'_8)\right)\right]
\end{displaymath}
The constraints associated to $[2^4,1^2]$ impose

\begin{equation}
\begin{aligned}
c_6&= \tfrac{1}{2} c_2 (c_4-\tfrac{1}{4}c_2^2)\\
c'_8&= \tfrac{1}{4} (c_4-\tfrac{1}{4}c_2^2)^2
\end{aligned}
\label{D5exevenconstraints}\end{equation}
But, since $c_2=0$, this yields $c_6=0$ and $c'_8=\tfrac{1}{4}c_4^2$. So the spectral curve is

\begin{equation}
0 = w^2\left[w^8 + x^2 y(x-y)\left(c_4 w^4 + x^2 y(x c_8 -\tfrac{1}{4}y c_4^2)  \right)\right]
\label{D5spectralcurve}\end{equation}
The odd-type constraint associated to $[5,3,1^2]$ is $c_8= a_4^2$.

\begin{itemize}%
\item If we impose that constraint, we obtain the Hitchin system $\mathcal{E}_7\times \mathcal{E}_7$, where $\mathcal{E}_7\to \mathbb{C}$ is the elliptic fibration with an isolated $III^*$ singularity at the origin. Setting\begin{displaymath}
c_4=(b_1+b_2),\qquad a_4=\tfrac{1}{2}(b_1-b_2)
\end{displaymath}
the Hitchin bases for the two copies of $\mathcal{E}_7$ are parameterized by $b_1,b_2$ respectively.

\item If we don't impose the constraint, we obtain the Hitchin system $\operatorname{Hilb}(2,\mathcal{E}_7)$, the Hilbert Scheme of two points on $\mathcal{E}_7$.

\end{itemize}
As another (simpler) example, consider the 3-punctured sphere in the $D_4$ theory with Hitchin partitions $[3^2,1^2],[3^2,1^2],[7,1]$. Here, the spectral curve is

\begin{equation}
0=w^2\left[w^6+x^2 y^2(x-y)(x c_6-y c'_6)
\right]
\label{E6spectral}\end{equation}
The two $[3^2,1^2]$ partitions have odd-type constraints associated to them

\begin{equation}
c_6= (a_1+a_2)^2,\qquad c'_6= (a_1-a_2)^2
\label{E6oddconstraints}\end{equation}
\begin{itemize}%
\item If we impose both constraints, then \cite{Ergun:2020fnm} showed by explicit computation that the Hitchin system we obtain is $\mathcal{E}_6\times \mathcal{E}_6$, where $\mathcal{E}_6\to \mathbb{C}$ is the elliptic fibration with an isolated $IV^*$ singularity at the origin. The Hitchin bases of the two $\mathcal{E}_6$s are parametrized by $a_1$ and $a_2$ respectively.
\item If we impose just one of the constraints, we obtain the Hitchin system $\operatorname{Hilb}(2,\mathcal{E}_6)$.
\item If we impose neither constraint in \eqref{E6oddconstraints}, we obtain a Hitchin system whose succinct description eludes us at present.
\end{itemize}

\section*{Acknowledgements}
We would like to thank David Baraglia, Eric Sommers, Jeffrey Adams, Florian Beck and especially Andres Fernandez Herrero for many helpful discussions. Part of this work was conducted during the Galileo Galilei Institute Workshop on ``Emergent Geometries from Strings and Quantum Fields.'' The work of JD and CP was supported in part by the National Science Foundation under Grant No.~PHY--2210562. CP is also partially supported by the Robert N.~Little Fellowship. The work of AB was done while he was a Simons and Infosys Visiting Research Scholar at ICTS, Bengaluru. The research of RD was partially supported by NSF grant DMS--2001673; by NSF grant DMS--2244978, FRG: Collaborative Research: New birational invariants; and by Simons Foundation Collaboration grant \#390287 ``Homological Mirror Symmetry.''
 
\section*{Declarations}
The authors have no competing interests/conflicting interests to declare that are relevant to the content of this article.

\section*{Appendices}
\appendix

\section{The Equivariant Fundamental Group and Lusztig's Quotient}\label{App:Lusztig}

For the benefit of readers, we recall the formulas that describe the equivariant $G_{ad}$-fundamental group/component group $A(\mathcal{O})$ and Lusztig's quotient $\overline{A}(\mathcal{O})$. 

\begin{itemize}
    \item $A(\mathcal{O})$ in type D (see \cite{CollingwoodMcGovern} ) : Let $a$ be the number of distinct odd parts in the partition label $[\mathcal{O}]$. Then, $A(\mathcal{O}) = (\mathbb{Z}_2)^m$ where $m = \text{max}(a-1,0)$ if all odd parts have even multiplicity and $m = \text{max}(a-2,0)$ otherwise. 
    \item $\overline{A}(\mathcal{O})$ in type D (originally defined in \cite{MR742472}, we use the summary provided in \cite{MR1850659} and \cite{MR1927953})  :  Let $[\mathcal{O}] = [\lambda_1^{a_1},\lambda_2^{a_2},\ldots]$ and let $\nu_i = \sum_{j=1}^{i} a_j $. Now, identify the odd parts in $[\mathcal{O}]$ for which $\nu_i$ is odd. These are sometimes called \textit{marked parts}.  Form a string of integers $\mu$ using the marked parts. Let $m$ be the number of independent, non-empty, even cardinality subsets of $\mu$. We have that $\overline{A}(\mathcal{O}) = (\mathbb{Z}_2)^m$. 
\end{itemize}

Note that the notions of a marked pair in Def \ref{marked-pair-definition} and that of a \textit{marked part} are related but different. Every marked pair has an underlying marked part but not every marked part is part of a marked pair. This will play an important role in what follows. 

From the prescription to compute Lusztig's quotient and the definition of $\overline{A}_b(\mathcal{O})$ in Def \ref{abarb-definition}, it is clear that $\bar{A}_b \subset \bar{A}(\mathcal{O})$. More generally, we have

\begin{displaymath}
\bar{A}_b \subseteq \bar{A}(\mathcal{O}) \subseteq A(\mathcal{O})
\end{displaymath}
Here is a table with some illustrative examples :

\begin{center}
\begin{tabular}{l|l|l|l}
$\mathcal{O}$&$\overline{A}_b$&$\overline{A}$&$A$\\
\hline 
$[3^2,1^2]$&$\mathbb{Z}_2$&$\mathbb{Z}_2$&$\mathbb{Z}_2$\\
$[7,5,3,1]$&$\mathbb{Z}_2$&$\mathbb{Z}_2$&$(\mathbb{Z}_2)^2$\\
$[3^2,2^2,1^2]$&$1$&$\mathbb{Z}_2$&$\mathbb{Z}_2$\\
$[11,9,7,5^2,3^2,1]$&$\mathbb{Z}_2$&$\mathbb{Z}_2$&$(\mathbb{Z}_2)^4$\\
$[11,9,8^2,7,5^2,3^2,1]$&$1$&$\mathbb{Z}_2$&$(\mathbb{Z}_2)^4$\\
$[11,9,7,5,4^2,3^2,1^2]$&$(\mathbb{Z}_2)^2$&$(\mathbb{Z}_2)^4$&$(\mathbb{Z}_2)^5$\\
\end{tabular}
\end{center}

\section{Hitchin Systems for 3-punctured Spheres}\label{HS3p}

Here we set out our conventions for meromorphic Hitchin systems [cite] on a punctured curve $C$. In particular, we will give a very concrete model for type-$D_N$ on a 3-punctured sphere.

Let $D=\sum_i p_i$ be the divisor corresponding to the punctures. A $D_N$ meromorphic Higgs bundle is a pair $(E,\Phi)$, composed of a $D_{N}$-bundle $E$  on $C$ and a holomorphic section $\Phi$ of $ad(E) \otimes L$, where $L=K_C(D)$, called the Higgs field. We fix $\Phi(p_i)\in O_i$, where the $O_i$ are a collection of nilpotent orbits in $\mathfrak{so}(2N)$.

The moduli space of Higgs bundles on $C$,  $\mathcal{M}_{Higgs}$, fibers over the Hitchin base. In type-$A_{N-1}$ \cite{Balasubramanian:2020fwc}, the Hitchin base is the graded vector space $\bigoplus_{k=2}^N H^0\bigl(C,L^{\otimes{k}}(-\sum_i \chi_i^{(k)} p_i)\bigr)$. The $\chi_i^{(k)}$ are positive integers which depend on the choice of nilpotent orbits, $O_i$. For the regular nilpotent orbit $\chi^{(k)}=1$, whereas for the minimal nilpotent orbit  $\chi^{(k)}=k-1$.

More concretely, the sections $\phi_k\in H^0\bigl(C,L^{\otimes{k}}(-\sum_i \chi_i^{(k)} p_i)\bigr)$ are symmetric polynomials in the Higgs field $\Phi$
\[
\phi_k= s_k(\Phi)=\tfrac{1}{k}\operatorname{Tr}(\Phi^k)+\dots
\]
In type-$D_N$, the $\phi_k$ vanish for odd $k$ and $\phi_{2N}=\tilde{\phi}^2$, where
\[
\tilde{\phi}=\operatorname{Pfaff}(\Phi)
\]
As sections of $L^{\otimes{k}}$, the $\phi_k$ vanish to order $\chi^{(k)}_i$ at $p_i$. In type-$D_N$ (in contrast to type-$A$), the leading coefficients of the $\phi_k$ at $p_i$ obey local constraints that are the subject of this paper. The Hitchin base, in type-$D_N$ is the space of solutions to those constraints.

The spectral curve $\Sigma\to C$ is a hypersurface in the total space of $L$
\begin{equation}\label{spectralgen}
\Sigma=\left\{0=w^{2N} +\sum_{k=1}^{N-1} w^{2(N-k)}\phi_k + \tilde{\phi}^2\right\}
\end{equation}
where $w$ is the fiber coordinate on $L$. Every such $D_N$ spectral curve admits an involution $\iota: \Sigma \circlearrowleft$, under which $w$ and the Seiberg-Witten differential are odd
\[
\iota^*(w)=-w,\qquad \iota^*(\lambda_{\text{SW}})=-\lambda_{\text{SW}}
\]
Let $\pi: \Sigma\to \Sigma/\iota$ be the projection; the fibers of the Hitchin map are the Prym variety $\operatorname{Prym}(\pi)$.

For the 3-punctured sphere, we can be even more concrete.
Let $x,y$ be homogeneous coordinates on $C=\mathbb{CP}^1$, such that the three punctures are located at $p_1=\{x=0\}$, $p_2=\{y=0\}$ and $p_3=\{x=y\}$, respectively. The line bundle\footnote{If we take the homogeneous coordinates on $\mathbb{CP}^2$ to be $(x,y,w)$, then the total space of $L$ can be thought of as $\mathbb{CP}^2\backslash(0,0,1)$. } $L=\mathcal{O}(1)\to \mathbb{CP}^1$. The $\phi_k(x,y)$ are then homogeneous polynomials of degree $2k$.
Choosing three arbitrary Hitchin partitions, one for each puncture, the spectral curve takes the form of a homogeneous polynomial of degree $2N$ in the variables $x,y,w$
\begin{equation}\label{3puncturedspectral}
\Sigma= \left\{ 0= w^{2N} +\sum_{k=1}^{N-1}\phi_{2k}(x,y)w^{2(N-k)} +\tilde{\phi}(x,y)^2 \right\}
\end{equation}
Overall, the polynomial in \eqref{3puncturedspectral} is homogeneous, of degree $2N$ in the variables $(x,y,w)$.

For three regular nilpotents, the $\phi_{2k}$ vanish at least linearly at each of the points and we can write
\begin{displaymath}
\phi_{2k}= x y (x-y) f_{2k}(x,y),\qquad \tilde{\phi}= x y (x-y) \tilde{f}(x,y)
\end{displaymath}
where the $f_{2k}$ are homogeneous polynomials of degree $2k-3$ and $\tilde{f}$ is a homogeneous polynomial of degree $N-3$. The Seiberg-Witten differential is a meromorphic 1-form on the total space of $L$. In the case at hand, it has a simple expression in homogeneous coordinates
\begin{displaymath}
\lambda_{SW} = \frac{w(x dy - y dx)}{x y (x-y)}
\end{displaymath}
and is manifestly odd under $\iota$. There is a second $\mathbb{C}^*$ action, under which $w$ scales with weight-1 and the \emph{coefficients} in the polynomials $f_{2k}(x,y)$ scale with weight-$2k$ (and the coefficients in $\tilde{f}(x,y)$ scale with weight-$N$. As a consequence, $\lambda_{SW}$ scales with weight-1. Since the coefficients in the polynomials parameterize the Hitchin base, the latter is graded by this $\mathbb{C}^*$ action and the constraints are homogeneous.





\section{The work of Kraft-Procesi on Special Pieces}
\label{kpappendix}

In this appendix we will will briefly review the main result of  \cite{kraft1989special} and some of the tools used in the proof of this result. 



To begin, we recall that nilpotent orbits in $D_N$ can be designated by $D$-partitions of $2N$, or alternatively by a Young diagram with columns of heights specified by the partition. The only requirement for these partitions is that even integers must appear with even multiplicity, and in cases where all integers are even, there are actually two orbits linked to that partition. Another method of representing these nilpotent orbits involves using partitions of $2N$, where the corresponding Young diagrams have rows of length specified by those partitions (Hitchin). Among these orbits, there is a subset referred to as \textbf{special} which are those in the range of the Spaltenstein map $d$ \footnote{Given by transposing and D-collapsing the Young tableau}. An alternative definition is that a partition $[\mathcal{O}]$ is special if the transpose $[\mathcal{O}]^T$ is a C-partition \cite{CollingwoodMcGovern}\footnote{A partition of $2N$ with the property that odd integers occur with even multiplicity}. An easy way to check this is to check that there are even number of odd parts between even parts or beginning or ending of the partition. For example $[3^2,2^2,1^2]$ is special, while $[3,2^2,1]$ is not.

The Spaltenstein map, $d$ takes Nahm partitions and maps them to Hitchin partitions and vice-versa. However, in type $D$, this map is not a one-to-one correspondence. When two Nahm partitions map to the same Hitchin partition, they are considered to be part of the same \textbf{special piece}. In other words, a special piece is defined as the set of nilpotent orbits that map to the same special nilpotent orbit under the action of $d^2$.

There is a partial-ordering on the nilpotent orbits in $\mathfrak{g}$, given by orbit closure.  A \emph{minimal degeneration} $O\to O'$ is one where $O'$ lies in the closure of $O$ and there are no intervening \emph{between} $O$ and $O'$ in the partial ordering\footnote{More explicitly, there's no third orbit $O''$ such that $O''$ lies in the closure of $O$ and $O'$ lies in the closure of $O''$.}. These minimal degenerations correspond to \emph{moves}, in which we rearrange the boxes in the corresponding Young tableaux.  These moves were classified (a)--(g) in  \cite{kraft1989special}. Moves (b)--(f) involve moving two boxes. Moves (a) and (g) are distinguished by the fact that they involve moving only one box.

\begin{itemize}%
\item \textbf{Type a}: For this move, the $j^{\text{th}}$ column of the partition has odd height $p_j = 2r+1$ and is followed by followed by the $(j+1)^{\text{st}}$ column of height $p_{j+1} = 2r-1$. The move consists of moving one box from the $j^{\text{th}}$ column to the $(j+1)^{\text{st}}$ column. After the move both columns have height $p_j=p_{j+1}=2r$.

\begin{figure}[H]
\begin{center}
\begin{tikzpicture}
\tikzset{pics/yb/.style n args={2}{
code= {\draw ({.50*#1},{-.50*#2}) rectangle ++(.50,.50);}}}
\path
(.2,.50) node[above] {$\scriptsize\text{$j$}$}
(0.8,.50) node[above] {$\scriptsize\text{$j\! +\!1$}$}
(.0,0) 
pic {yb={0}{0}}
pic {yb={1}{0}}
(0.50,- .25) edge[dotted, line width=1pt, dash pattern=on 2pt off 2pt on 2pt]   (.50,-1.50)
 (0,-2.0)
 (0,-2.25)
pic {yb={0}{0}}
pic {yb={0}{1}}
pic[red] {yb={0}{2}}
pic {yb={1}{0}}
(1.50,-1.0) edge[->] node[above] {$\footnotesize\text{a-type}$} (4,-1.0)
(4.45,.50) node[above] {$\scriptsize\text{$j$}$}
(5.05,.50) node[above] {$\scriptsize\text{$j\! +\!1$}$}
(4.25,0)
pic {yb={0}{0}}
pic {yb={1}{0}}
(4.75,- .250) edge[dotted, line width=1pt, dash pattern=on 2pt off 2pt on 2pt]   (4.75,-1.50)
 (4.25,-2.0)
 (4.25,-2.250)
pic {yb={0}{0}}
pic {yb={0}{1}}
pic {yb={1}{0}}
pic[red] {yb={1}{1}}
;
\draw [decorate,
	decoration = {calligraphic brace}] (-.5,-3.25) --  (-.5,0.5);
\path
(-.5,-1.375) node[left] {$2r+1$};
\draw [decorate,
	decoration = {calligraphic brace}] (5.5,0.5) --  (5.5,-2.75);
\path
(6.25,-1.125) node[left] {$2r$};
\end{tikzpicture}
\caption{An example of the (a) move of Kraft-Procesi}
\label{a-move-figure}
\end{center}
\end{figure}

\item \textbf{Type g}: For this move, $p_{j}=2r+1$, $p_{j+1}=p_{j+2}=\dots=p_{j+2m}=2r$ and $p_{j+2m+1}=2r-1$. The move consists of moving one box from the $j^{\text{th}}$ column to the $(j+2m+1)^{\text{st}}$ column. After the move, $p_j=p_{j+1}=\dots=p_{j+2m+1}=2r$.


\begin{figure}[H]
\begin{center}
\begin{tikzpicture}
\tikzset{pics/yb/.style n args={2}{
code= {\draw ({.50*#1},{-.50*#2}) rectangle ++(.50,.50);}}}
\path
(.25,.50) node[above=4pt] {$\scriptsize\text{$j$}$}
(3.2,.50) node[above=4pt] {$\scriptsize\text{$j\!\! +\!\! 2m\!\!+\!\!1$}$}
(.0,0) 
pic {yb={0}{0}}
pic {yb={1}{0}}
(0.50,- .25) edge[dotted, line width=1pt, dash pattern=on 2pt off 2pt on 2pt]   (.50,-1.50)
(1.125,0.25) edge[dotted, line width=1pt, dash pattern=on 2pt off 2pt on 2pt]   (2.0,0.25)
(1.125,-2.25) edge[dotted, line width=1pt, dash pattern=on 2pt off 2pt on 2pt]   (2.0,-2.25)
 (0,-2.0)
 (0,-2.25)
pic {yb={0}{0}}
pic {yb={0}{1}}
pic[red] {yb={0}{2}}
pic {yb={1}{0}}
pic {yb={1}{1}}
(2.125,0.0)
pic {yb={0}{0}}
pic {yb={1}{0}}
(2.625,- .25) edge[dotted, line width=1pt, dash pattern=on 2pt off 2pt on 2pt]   (2.625,-1.50)
(2.125,- 2.25)
pic {yb={0}{0}}
pic {yb={0}{1}}
pic {yb={1}{0}}
(3.50,-1.0) edge[->] node[above] {$\footnotesize\text{g-type}$} (5.75,-1.0)
(6.25,.50) node[above=4pt] {$\scriptsize\text{$j$}$}
(9,.50) node[above=4pt] {$\scriptsize\text{$j\!\! +\!\! 2m\!\!+\!\!1$}$}
(6,0) 
pic {yb={0}{0}}
pic {yb={1}{0}}
(6.50,- .25) edge[dotted, line width=1pt, dash pattern=on 2pt off 2pt on 2pt]   (6.50,-1.50)
(7.125,0.25) edge[dotted, line width=1pt, dash pattern=on 2pt off 2pt on 2pt]   (8.0,0.25)
(7.125,-2.25) edge[dotted, line width=1pt, dash pattern=on 2pt off 2pt on 2pt]   (8.0,-2.25)
 (6,-2.0)
 (6,-2.25)
pic {yb={0}{0}}
pic {yb={0}{1}}
pic {yb={1}{0}}
pic {yb={1}{1}}
(8.125,0.0)
pic {yb={0}{0}}
pic {yb={1}{0}}
(8.625,- .25) edge[dotted, line width=1pt, dash pattern=on 2pt off 2pt on 2pt]   (8.625,-1.50)
(8.125,- 2.25)
pic {yb={0}{0}}
pic {yb={0}{1}}
pic {yb={1}{0}}
pic[red] {yb={1}{1}}
;
\draw[decorate, decoration={brace,amplitude=3.7pt}] (0.5,0.6) -- node[above=3.7pt] {$\scriptsize\text{$2m$}$} (2.65,0.6);
\draw[decorate, decoration={brace,amplitude=3.7pt}] (6.5,0.6) -- node[above=3.7pt] {$\scriptsize\text{$2m$}$} (8.65,0.6);
\draw [decorate,
	decoration = {calligraphic brace}] (-.5,-3.25) --  (-.5,0.5);
\path
(-.5,-1.375) node[left] {$2r+1$};
\draw [decorate,
	decoration = {calligraphic brace}] (9.35,0.5) --  (9.35,-2.75);
\path
(10.1,-1.125) node[left] {$2r$};
\end{tikzpicture}
\end{center}
\caption{An example of the (g) move of Kraft-Procesi}
\label{g-move-fig}
\end{figure}

\end{itemize}
These moves are what Kraft-Procesi call \textit{small degenerations}. Two moves are considered \textit{independent} if they can be composed. In other words, if making a move makes it impossible for us to perform a second move, those are considered to be dependent.

To understand what it means for two degenerations to be independent, consider  example the $\mathfrak{so}$(24) partition $\eta = [7^2,5,3,1^2]$. A priori, we can perform three different \textbf{a}-moves, which we'll differentiate by using a subindex:

\begin{displaymath}
\eta \xrightarrow{a_1} \sigma_1 = [7,6^2,3,1^2]
\end{displaymath}
\begin{displaymath}
\eta \xrightarrow{a_2} \sigma_2 = [7^2,4^2,1^2]
\end{displaymath}
\begin{displaymath}
\eta \xrightarrow{a_3} \sigma_3 = [7^2,5,2^2,1]
\end{displaymath}

Looking at $\sigma_2$, we see that we cannot perform another $a_1$-move, hence $a_1$ and $a_2$ are not independent. On the other hand, moves $a_1$ and $a_3$ can be composed in the following way (the order does not matter)

\begin{displaymath}
\eta \xrightarrow{a_1a_3} \sigma_{13} = [7,6^2,2^2,1]
\end{displaymath}
and hence constitute independent degenerations.  

Furthermore, we found it convenient to introduce the term ``non-special degeneration`` (Def \ref{non-special-deformation}) to refer to those small degenerations which lead to non-special orbits. It is only these small degenerations that play an important role in the structure of the special piece.  

Since the terminology around various degenerations can be confusing, we recall them here in one place : 

\begin{itemize}%
\item \emph{Minimal degeneration} (as in Kraft-Procesi \cite{MR694606}) : They are labeled from type (a) to (g).
\item \emph{Small degeneration} (as in \cite{MR694606}) : They are minimal degenerations of type (a) or (g)
\item \emph{Non-special degeneration} (implicit in \cite{MR694606}, we use it explicitly) : These are small degenerations of a special Nahm orbit $\mathcal{O}_s$ that result in a non-special orbit $\mathcal{O}_{ns}$. By Spaltenstein (cited in \cite{MR694606}), we know that $\mathcal{O}_{ns}$ is in the same special piece as $\mathcal{O}_s$.

\end{itemize}

To see the need for the third definition above, consider the special nilpotent orbit $\eta=[7^3,5^2,3^3,1^4]$ in $\mathfrak{so}(44)$ which has three independent small degenerations corresponding to the following operations :

\begin{displaymath}
\eta \xrightarrow{a_1} \sigma_{1} = [7^2,6^2,5^2,3^3,1^4]
\end{displaymath}

\begin{displaymath}
\eta \xrightarrow{a_2} \sigma_{2} = [7^3,5,4^2,3^2,1^4]
\end{displaymath}

\begin{displaymath}
\eta \xrightarrow{a_3} \sigma_{3} = [7^3,5^2,3^2,2^2,1^2]
\end{displaymath}

Out of this, only the third one $\sigma_3$ is a non-special orbit and hence only $a_3$ constitutes a non-special degeneration in the sense of Def \ref{non-special-deformation}.

With this terminology, we can state Theorem 6.2 of Kraft-Procesi \cite{kraft1989special} in the following way:

\begin{theorem}
({Kraft-Procesi}) If a special orbit $\mathcal{O}_s$ admits $s$ \emph{independent} non-special degenerations, then the size of the special piece containing $\mathcal{O}_s$ is $2^s$.
\label{KPrestated}
\end{theorem}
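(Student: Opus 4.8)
The plan is to construct an explicit bijection between the orbits in the special piece of $\mathcal{O}_s$ and the subsets of the set of $s$ independent non-special degenerations, thereby counting $2^s$ orbits. The starting point is Spaltenstein's result (recalled above) that any small degeneration of a special orbit which produces a non-special orbit lands in the same special piece; iterating, every orbit obtained from $\mathcal{O}_s$ by a sequence of non-special degenerations lies in the special piece $\{\mathcal{O}\mid d^2(\mathcal{O})=\mathcal{O}_s\}$, and the task is to show that these, and only these, exhaust the piece.

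First I would localize the non-special degenerations. By Def \ref{non-special-deformation}, each such degeneration is supported on a maximal contiguous ``chunk'' of $[\mathcal{O}_s]$ of the form $(2r{+}1,(2r)^{2m},2r{-}1)$, which it replaces by $(2r)^{2m+2}$, leaving all other parts untouched. The key structural claim is that the $s$ independent non-special degenerations have pairwise disjoint supporting chunks: this is precisely the content of independence as defined in Appendix \ref{kpappendix}, since two degenerations fail to be independent exactly when their chunks overlap, so that performing one destroys the pattern required for the other. Granting disjointness, any subset $T\subseteq\{1,\dots,s\}$ may be performed simultaneously (the order being immaterial), yielding a well-defined D-partition $\lambda_T$ and hence an orbit $\mathcal{O}_T$; distinct subsets give distinct partitions, since $\lambda_T$ and $\lambda_{T'}$ differ exactly on the chunks in the symmetric difference $T\triangle T'$. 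This already produces $2^s$ distinct orbits in the special piece.

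The main obstacle is the reverse inclusion --- that there are no further orbits, i.e.\ that the closure poset of the special piece is exactly the Boolean lattice $B_s=\{0,1\}^s$. For this I would show that the special piece factors as a product of ``elementary'' special pieces, one per chunk, combining the disjoint-support decomposition with Kraft-Procesi's description of special pieces as irreducible varieties with a product stratification \cite{kraft1989special}. The delicate point is purely local: within a single chunk $(2r{+}1,(2r)^{2m},2r{-}1)$ one must verify that the only orbits of the special piece are the special chunk itself and its fully degenerated form $(2r)^{2m+2}$, with no intermediate orbit belonging to the piece. Concretely, any D-partition lying strictly between these two in the closure order must have $d^2$-image different from $\mathcal{O}_s$, hence falls outside the special piece, so the local piece has size exactly $2$. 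Once each chunk contributes a factor of $2$ and the global piece is their product, the count $2^s$ follows.

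Finally, I would record the two distinguished elements: $T=\emptyset$ gives the unique special (maximal) orbit $\mathcal{O}_s$, while $T=\{1,\dots,s\}$ gives the unique minimal orbit, namely the smallest non-special orbit in the piece obtained by degenerating all $s$ chunks. The heart of the argument is the local size-$2$ computation together with the product decomposition; the remainder is bookkeeping in the partition combinatorics of type D.
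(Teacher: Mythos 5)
The paper does not actually prove this statement: it is quoted verbatim as Theorem 6.2 of Kraft--Procesi \cite{kraft1989special}, merely restated in the paper's terminology of ``non-special degenerations.'' So there is no in-paper proof to compare against, and your attempt has to stand on its own as a reconstruction of the Kraft--Procesi argument. The forward direction of your argument --- locating each non-special degeneration on a chunk $(2r{+}1,(2r)^{2m},2r{-}1)$, observing that independence forces the chunks to be pairwise disjoint (consistent with the paper's $[7^2,5,3,1^2]$ example, where $a_1,a_2$ share the part $5$ and are dependent while $a_1,a_3$ are disjoint and independent), and producing $2^s$ distinct D-partitions $\lambda_T$ --- is essentially sound, although even here you should verify that the \emph{composite} of several chunk-moves still has the same image under $d$ as $\mathcal{O}_s$; Spaltenstein's result as you invoke it only covers a single degeneration of the special orbit, and membership of the iterated degenerations in the piece is what Def \ref{non-special-deformation} builds into its statement rather than something it proves.

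The genuine gap is the reverse inclusion. The special piece is by definition the full fiber $\{\mathcal{O} : d(\mathcal{O})=d(\mathcal{O}_s)\}$, which is a global condition on the partition computed via transpose-and-D-collapse; a priori an orbit in this fiber could differ from $[\mathcal{O}_s]$ \emph{outside} the chunks, or differ inside a chunk in some way other than the full collapse to $(2r)^{2m+2}$. Your argument for ruling this out appeals to ``Kraft--Procesi's description of special pieces as irreducible varieties with a product stratification'' --- but that description \emph{is} (a consequence of) the theorem you are trying to prove, so the appeal is circular. Your fallback, that any D-partition strictly between the chunk and its collapse in the closure order leaves the piece, also misses the point: since an (a)- or (g)-move is a \emph{minimal} degeneration there is nothing strictly between the two endpoints anyway, and the real work is to show that nothing \emph{elsewhere} in $\overline{\mathcal{O}_s}$ shares the same $d$-image. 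Closing this requires an explicit combinatorial computation of $d^{-1}(d([\mathcal{O}_s]))$ --- an analysis of which boxes of $[\mathcal{O}]^T$ can move under D-collapse, in the spirit of the paper's wall lemma (Lemma \ref{lemma2}) --- or else an honest citation of Kraft--Procesi, which is what the paper itself does.
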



\bibliographystyle{utphys}
\bibliography{refs}
\end{document}